\newcommand{\model}{\mathcal{M}}
\newcommand{\SBIC}{\mathrm{sBIC}}
\newtheorem{theorem}{Theorem}
\newtheorem{proposition}{Proposition}
\theoremstyle{definition}
\newtheorem{definition}{Definition}
\theoremstyle{remark}
\newtheorem{remark}{Remark}
\newtheorem{example}{Example}
\numberwithin{equation}{section}
\numberwithin{figure}{section}
\numberwithin{theorem}{section}
\numberwithin{lemma}{section}
\numberwithin{proposition}{section}
\numberwithin{corollary}{section}
\numberwithin{question}{section}
\numberwithin{definition}{section}
\numberwithin{xca}{section}
\numberwithin{remark}{section}
\numberwithin{example}{section}
\title{A Bayesian information criterion for singular models}
\author{Mathias Drton} 
\address{Department of Statistics, University
  of Washington, Seattle, WA 98195-4322} 
\email{md5@uw.edu}
\author{Martyn Plummer}
\address{International Agency for Research on Cancer, 150 cours Albert
  Thomas, 69372 Lyon cedex 08, France}
\email{plummerM@iarc.fr}
\begin{document}

\begin{abstract}
  We consider approximate Bayesian model choice for model selection
  problems that involve models whose Fisher-information matrices may
  fail to be invertible along other competing submodels.  Such
  singular models do not obey the regularity conditions underlying the
  derivation of Schwarz's Bayesian information criterion (BIC) and the
  penalty structure in BIC generally does not reflect the frequentist
  large-sample behavior of their marginal likelihood.  While
  large-sample theory for the marginal likelihood of singular models
  has been developed recently, the resulting approximations depend on
  the true parameter value and lead to a paradox of circular
  reasoning.  Guided by examples such as determining the number of
  components of mixture models, the number of factors in latent factor
  models or the rank in reduced-rank regression, we propose a
  resolution to this paradox and give a practical extension of BIC for
  singular model selection problems.
\end{abstract}

\keywords{Bayesian information criterion, factor analysis, mixture
  model, model selection, reduced-rank regression, singular learning
  theory, Schwarz information criterion}

\maketitle

\section{Introduction}
\label{sec:intro}

Information criteria are valuable tools for model selection
\citep{Burnham:2002,Claeskens:2008,Konishi:2008}.  At a high level,
they fall into two categories
\citep{yang:2005,vanerven:2012,wit:2012}.  On one hand, there are
criteria that target good predictive behavior of the selected model.
For instance, cross-validation based scores assess the quality of
out-of-sample predictions by splitting available data into test
and training cases, and Akaike's information criterion (AIC) provides
an estimate of an out-of-sample prediction (or generalization) error
that is justified via asymptotic distribution theory for large samples
\citep{Akaike:1974}.  Following a different philosophy that will be
the focus of this paper, the Bayesian information criterion (BIC) of
\cite{Schwarz:1978} draws motivation from Bayesian inference.
Schwarz's criterion aims to capture key features of
posterior model uncertainty via a penalty that is motivated by the
large-sample properties of the marginal likelihood (also commonly
referred to as integrated likelihood or evidence).  In a nutshell,
under suitable regularity conditions, a quadratic approximation to the
log-likelihood function can be used to relate the marginal likelihood
to a Gaussian integral in which the sample size acts as an inverse
variance.  This dependence of the Gaussian integral on the sample size
leads to the familiar BIC penalty term that, on the log-scale,
consists of the product of model dimension and the logarithm of the
sample size.

The BIC penalizes model complexity more heavily than predictive
criteria such as AIC.  From the frequentist perspective, it has been
shown that the BIC's penalty depends on the sample size in a way that
makes the criterion consistent for a wide range of problems.  In other
words, when optimizing BIC the probability of selecting a fixed
most parsimonious true model tends to one as the sample size tends to
infinity \cite[e.g.,][]{Nishii:1984,Haughton:1988,Haughton:1989}.
However, a wide range of penalties would yield consistency of a model
selection score, and it is instead the aim of capturing the asymptotic
scaling of the marginal likelihood that leads to the familiar
dependence on dimension and log-sample size.  Indeed, from a
Bayesian point of view, the BIC supplies rather crude but
computationally inexpensive proxies to otherwise difficult to
calculate posterior model probabilities, which form the basis for
Bayesian model choice and averaging; see \cite{Kass:1995},
\cite{raftery:1995}, \cite{DiCiccio:1997}, \cite{hoeting:1999} or
\citet[][Chap.~7.7]{hastie:2009}.

In this paper, we are concerned with Bayesian information criteria in
the context of singular model selection problems, that is, problems
that involve models with Fisher-information matrices that may fail to
be invertible.  For example, due to the break-down of parameter
identifiability, the Fisher-information matrix of a mixture model with
three components is singular at a distribution that can
be obtained by mixing only two components.  This clearly presents a
fundamental challenge for selection of the number of components.  In
particular, when the Fisher-information matrix is singular, the
log-likelihood function does not admit a large-sample approximation by
a quadratic form.  \cite{Rotnitzky:2000} illustrate some of the
resulting difficulties in asymptotic distribution theory under an
assumption of identifiability.  Non-identifiability of parameters, as
present in the examples we will consider, leads to considerably more
complicated scenarios as discussed, for instance, by \cite{Liu:2003} and
\cite{azais:2006,azais:2009}.  
The key obstruction to justifying BIC is that in singular models there
need no longer be a connection between the Bayesian marginal
likelihood and a Gaussian integral.  In particular, a parameter count
or model dimension may fail to capture the asymptotic scaling of the
marginal likelihood \citep{Watanabe:Book}.  We illustrate this fact in
the following example.

\begin{example}
  \label{ex:intro}
  Suppose $\mathbf{Y}_n=(Y_{n1},\dots,Y_{nn})$ is a sample of independent and
  identically distributed observations whose unknown distribution is
  modeled as a mixture of two normal distributions.  Specifically,
  the data-generating distribution is assumed to be of the form
  \[
  \pi(\alpha,\mu_1,\mu_2) \;:=\; \alpha \mathcal{N}(\mu_1,1) +
  (1-\alpha) \mathcal{N}(\mu_2,1), 
  \]
  where $\alpha\in[0,1]$ is an unknown mixture weight,
  $\mu_1,\mu_2\in\mathbb{R}$ are two unknown means, and the variances
  are known and equal to one.  To exemplify later notation, we write
  out the likelihood function of the considered mixture model
  $\mathcal{M}$, which  maps the parameter vector $(\alpha,\mu_1,\mu_2)$ to
  \[
  P(\mathbf{Y}_n\,|\,\pi(\alpha,\mu_1,\mu_2),\mathcal{M}) \;=\; \prod_{i=1}^n \left[
    \alpha \varphi\big(Y_{ni}-\mu_1\big) + (1-\alpha)
    \varphi\big(Y_{ni}-\mu_2\big) 
  \right].
  \]
  Here, $\varphi$ denotes the standard normal density.  As a prior for
  Bayesian inference, consider a uniform distribution for $\alpha$,
  and take $\mu_1$ and $\mu_2$ to be independent $\mathcal{N}(0,16)$.
  Then the marginal likelihood of model $\mathcal{M}$ is
  \[
  L(\mathcal{M}) = \int_{[0,1]\times \mathbb{R}^2}
  P(\mathbf{Y}_n\,|\,\pi(\alpha,\mu_1,\mu_2),\mathcal{M})\,
  \varphi(\mu_1/4)\,\varphi(\mu_2/4) \; d(\alpha,\mu_1,\mu_2).
  \] 

  We now simulate values of the random variable $L(\mathcal{M})$.  For
  each choice of a sample size $n\in\{50,60,\dots,100\}$, we generate
  200 independent realizations of $L(\mathcal{M})$, drawing the sample
  $\mathbf{Y}_n$ from the normal mixture $\pi_0$ given by
  $\alpha=0.4$, $\mu_1=-2$ and $\mu_2=2$.  Following \cite{neal:1999},
  we compute each value of $L(\mathcal{M})$ by standard Monte Carlo
  with $10^7$ draws from the prior.  To allow for comparisons across
  different samples, we consider the marginal likelihood ratio
  $L_0(\mathcal{M})$ that is obtained by dividing $L(\mathcal{M})$ by
  $P(\mathbf{Y}_n\,|\,\pi_0)$, the likelihood of the sample under the
  true distribution.  The results are summarized in
  Figure~\ref{fig:intro}(a), which plots average values of
  $2\log L_0(\mathcal{M})$
  together with a least squares line relating $2\log L_0(\mathcal{M})$
  to $\log(n)$.  We also show 90\%-simultaneous confidence bands and a
  line with slope determined by large-sample theory.  We emphasize
  that the figure's horizontal axis has the sample size on the
  log-scale.  The slope of the least squares line comes out to be
  $-2.98$ and is close to the slope of $-3$ that is predicted by the
  parameter count from Schwarz's BIC.

  A different picture emerges, however, when we repeat the simulations
  changing the data-generating distribution $\pi_0$ to the standard
  normal distribution $\mathcal{N}(0,1)$; see
  Figure~\ref{fig:intro}(b).  In this case, the slope of the least
  squares line is no longer close to the negated parameter count.
  Instead, it is about $-1.62$.  In Section~\ref{sec:background}, we
  discuss asymptotic theory that addresses the issue that the
  Fisher-information matrix of $\mathcal{M}$ is singular at the
  standard normal distribution.  For $\mathcal{N}(0,1)$ data in this
  example, the theory predicts a slope of $-1.5$ \citep{aoyagi:2010}.
  This large-sample line is contained in the simultaneous confidence
  bands we give in Figure~\ref{fig:intro}(b).

  As we will review in Section~\ref{sec:background}, refined
  mathematical knowledge about the asymptotic scaling of the marginal
  likelihood of singular models has been obtained in recent years.  It
  is desirable to leverage this knowledge when defining an information
  criterion that is inspired by Bayesian methods.  However, it is not
  immediately clear how to cope with the fact that even the most basic
  features of the asymptotics for the marginal likelihood depend on
  the unknown data-generating distribution.  The generalization of BIC
  we introduce in this paper resolves this issue by averaging
  different approximations in a data-dependent way. 
\end{example}

\begin{figure}[t]
  \centering
  (a) \includegraphics[width=0.45\textwidth]{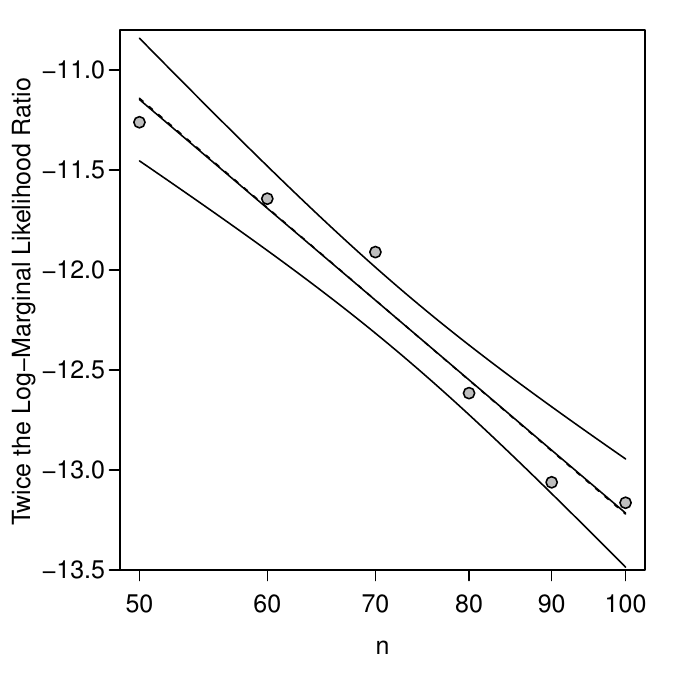}
  (b) \includegraphics[width=0.45\textwidth]{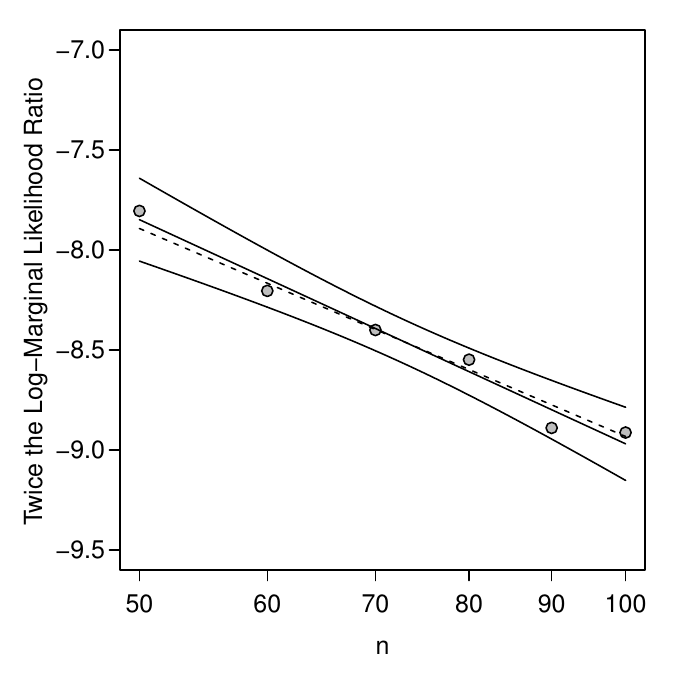}
  \caption{Averages of twice the log-marginal likelihood ratio
    for a Gaussian mixture model, least squares line, simultaneous
    confidence bands, and a line with theory-based slope (dashed): (a)
    data from a two-component mixture; (b) standard normal data. }
  \label{fig:intro}
\end{figure}

As conveyed by the above example, the selection of the number of
mixture components constitutes a singular model selection problem.
Other important examples of this type include determining the rank in
reduced-rank regression, the number of factors in factor analysis or
the number of states in latent class or hidden Markov models.  More
generally, all the classical hidden/latent variable models are
singular, which expresses itself also in complicated geometry of the
parameter space/set of distributions
\citep{geiger:2001,drton:2007,zwiernik:2012,zwiernik:2014,gassiat:2014}.

Despite the possible disconnect between penalization based on model
dimension alone and the large-sample behavior of Bayesian methods, the
standard BIC is a state-of-the-art method for many singular model
selection problems; see e.g.~\citet[Section 6.9]{McLachlan:2000},
\cite{steele:2010} and \cite{Baudry:2015} for mixture models and
\cite{Lopes:2004} for factor analysis.  From the frequentist
perspective, BIC is known to be consistent in many singular settings
\citep[Chap.~5.1]{keribin:2000,Drton:Book}.  However, as mentioned
earlier, consistency can be achieved with many penalization schemes,
which would not need to 
depend logarithmically on the sample size.

In this paper, we propose a generalization of BIC that utilizes
refined mathematical information about the marginal likelihood of the
considered statistical models, information that goes beyond mere model
dimension.  Schwarz's BIC is Bayesian in the sense that it differs
from the log-marginal likelihood only by terms that are bounded.  The
new criterion, which we abbreviate to $\SBIC$, maintains this
connection to Bayesian model choice also in singular settings.  Our
$\SBIC$ criterion preserves consistency properties of BIC and is an
honest generalization of the standard criterion in the sense that
$\SBIC$ coincides with Schwarz's BIC when the model is regular.
$\SBIC$ is designed to capture the key features of posterior model
uncertainty, but our numerical work shows that it can also lead to
improved frequentist model selection properties.

The new criterion is presented in Section~\ref{sec:bic:sing}, which is
preceded by a review of the theory that $\SBIC$ is built on
(Section~\ref{sec:background}).  This theory was developed over the
last decade by \cite{watanabe:2001,Watanabe:Book}.  The large-sample
properties of $\SBIC$ are shown in Section~\ref{sec:large-sample}.  We
first show consistency and then clarify the connection to the
log-marginal likelihood.  Numerical examples demonstrating the use of
$\SBIC$ are given in Sections~\ref{sec:numerical}
and~\ref{sec:numerical-mix}.  The former section focuses on problems
from multivariate analysis, namely, reduced-rank regression and factor
analysis.  The latter section treats mixture models, where it becomes
particularly apparent that the choice of the prior distribution in
each (singular) model has an impact on the form of $\SBIC$---as a
reader familiar with the work of \cite{rousseach:2011} may suspect.
We conclude the paper with a discussion of the strengths and the
limitations of the proposed methodology in
Section~\ref{sec:conclusion}.

\section{Background}
\label{sec:background}

Let $\mathbf{Y}_n=(Y_{n1},\dots,Y_{nn})$ denote a sample of $n$
independent and identically distributed observations, and let
$\{\model_i : i\in I\}$ be a finite set of candidate models for the
distribution of these observations.  For a Bayesian treatment, suppose
that we have positive prior probabilities $P(\model_i)$ for the
different models and that, in each model $\model_i$, a prior
distribution $P(\pi_i \,|\,\model_i)$ is specified for the probability
distributions $\pi_i\in\model_i$.  Write
$P(\mathbf{Y}_n\,|\, \pi_i,\model_i)$ for the likelihood of
$\mathbf{Y}_n$ under data-generating distribution $\pi_i$ from model
$\model_i$, and  let
\begin{equation}
  \label{eq:marg-lik}
  L(\model_i)\; := \; P(\mathbf{Y}_n\,|\, \model_i) = \int_{\model_i} P(\mathbf{Y}_n\,|\,
  \pi_i, \model_i)\; dP(\pi_i\,|\, \model_i)
\end{equation}
be the marginal likelihood of model $\model_i$.  Bayesian model choice
is then based on the posterior model probabilities
\[
P(\model_i\,|\, \mathbf{Y}_n) \;\propto\; P(\model_i) L(\model_i), \quad i\in I.
\]

The probabilities $P(\model_i\,|\, \mathbf{Y}_n)$ can be approximated by
various Monte Carlo procedures---see \cite{friel:2012} for a recent
review---but practitioners also often turn to computationally
inexpensive proxies suggested by large-sample theory.  These proxies
are based on the asymptotic properties of the sequence of random
variables $L(\model_i)$ obtained when $\mathbf{Y}_n$ is drawn from a
data-generating distribution $\pi_0\in \model_i$, and we let the
sample size $n$ grow.

In practice, a prior distribution $P(\pi_i\,|\, \model_i)$ is
typically specified by placing a distribution on the vector of parameters
appearing in a parametrization of $\model_i$; recall
Example~\ref{ex:intro}.  So assume that
\begin{equation}
  \label{eq:parametrized-model}
  \model_i =\left\{\, \pi_i(\boldsymbol{\omega}_i) : \boldsymbol{\omega}_i\in\Omega_i\,\right\}
\end{equation}
with $d_i$-dimensional parameter space
$\Omega_i\subseteq\mathbb{R}^{d_i}$, and that $P(\pi_i\,|\, \model_i)$
is the transformation of a distribution
$P(\boldsymbol{\omega}_i\,|\,\model_i)$ on $\Omega_i$ under the map
$\boldsymbol{\omega}_i\mapsto \pi_i(\boldsymbol{\omega}_i)$.  The
marginal likelihood is then the $d_i$-dimensional integral
\begin{equation}
  \label{eq:marg-lik-para}
  L(\model_i)\; = \; \int_{\Omega_i} P(\mathbf{Y}_n\,|\, \pi_i(\boldsymbol{\omega}_i), \model_i)\;
  dP(\boldsymbol{\omega}_i\,|\, \model_i).
\end{equation}
Now the observation of Schwarz and other subsequent work is that, under
suitable technical conditions on the model $\model_i$, the
parametrization $\boldsymbol{\omega}_i\mapsto \pi_i(\boldsymbol{\omega}_i)$ and the prior
distribution $P(\boldsymbol{\omega}_i\,|\,\model_i)$, it holds for all $\pi_0\in
\model_i$ that
\begin{equation}
  \label{eq:bic}
  \log L(\model_i) \;=\; \log P(\mathbf{Y}_n\,|\, \hat\pi_i,\model_i) - \frac{d_i}{2}
  \log(n) + O_p(1).
\end{equation}
Here, $P(\mathbf{Y}_n\,|\,\hat\pi_i,\model_i)$ is the maximum of the likelihood
function, 
and $O_p(1)$ stands for a sequence of remainder terms that is bounded
in probability. 
The first two terms on the right-hand side of (\ref{eq:bic}) can be
evaluated in statistical practice and 
may be used as a model score or a proxy for the logarithm of the
marginal likelihood.  The resulting \emph{Bayesian} or \emph{Schwarz's
  information criterion} for model $\model_i$ is 
\begin{equation}
  \label{def:bic}
  \mathrm{BIC}(\model_i) \;=\; \log P(\mathbf{Y}_n\,|\, \hat\pi_i,\model_i) - \frac{d_i}{2}
  \log(n).
\end{equation}

Briefly put, the large-sample behavior from (\ref{eq:bic}) relies on
the following properties of regular problems.  First, with high
probability, the integrand in (\ref{eq:marg-lik-para}) is negligibly
small outside a neighborhood of the maximum likelihood estimator
of $\boldsymbol{\omega}_i$.  Second, in such a neighborhood, the log-likelihood
function $\log P(\mathbf{Y}_n\,|\,\pi_i(\boldsymbol{\omega}_i),\model_i)$ can be
approximated by a negative definite quadratic form, while a smooth
prior $P(\boldsymbol{\omega}_i\,|\,\model_i)$ is approximately constant.  The
integral in~(\ref{eq:marg-lik-para}) may thus be approximated by the
product of $P(\mathbf{Y}_n\,|\, \hat\pi_i,\model_i)$ and a Gaussian
integral, in which the inverse covariance matrix equals $n$ times the
Fisher-information.  This $d_i$-dimensional Gaussian integral depends
on $n$ via the multiplicative factor $n^{-d_i/2}$, and taking
logarithms one arrives at~(\ref{eq:bic}).  We remark that this approach also
allows for estimation of the remainder term in~(\ref{eq:bic}), giving
a Laplace approximation with error $O_p(n^{-1/2})$ as discussed, for instance, in
\citet{Tierney:1986,Haughton:1988,Kass:1995}, or
\cite{Wasserman:2000}.

A large-sample quadratic approximation to the log-likelihood function
is not possible, however, when the Fisher-information matrix is
singular.  Consequently, the classical theory alluded to above does
not apply to singular models.  Indeed, (\ref{eq:bic}) is generally
false in singular models.  Nevertheless, asymptotic theory for the
marginal likelihood of singular models has been developed over the
last decade, culminating in the monograph of \cite{Watanabe:Book}.
Indeed, Theorem 6.7 in \cite{Watanabe:Book} shows that a wide variety
of singular models have the property that, for $\mathbf{Y}_n$ drawn from
$\pi_{0}\in \model_i$,
\begin{equation}
  \label{eq:wata-pi0}
  \log L(\model_i) \;=\; 
  \log P(\mathbf{Y}_n\,|\, \pi_0,\model_i) - \lambda_i(\pi_{0}) \log(n) + \big[
  m_i(\pi_0)-1 \big] \log\log(n) + O_p(1);
\end{equation}
see also the introduction to the topic in
\citet[Chap.~5.1]{Drton:Book}.  In this paper, we follow the
terminology of \cite{Watanabe:Book} and refer to the quantity
$\lambda_i(\pi_0)$ as the \emph{learning coefficient}.  However, other
terminology such as \emph{real log-canonical threshold} or
\emph{stochastic complexity} is in use.  The number $m_i(\pi_0)$ is
the \emph{multiplicity} of the learning coefficient/real log-canonical
threshold.  In contrast to the regular case, it is generally very
difficult to estimate the $O_p(1)$ remainder term
in~(\ref{eq:wata-pi0}).  We are not aware of any successful work on
higher-order approximations in statistically relevant singular
settings.

\begin{remark}
  The theorem giving (\ref{eq:wata-pi0}) is developed under the
  `fundamental conditions (I) and (II)' from Definitions 6.1 and 6.3
  in \cite{Watanabe:Book}.  While the precise nature of these
  conditions is not important for the developments in this paper, we
  would like to summarize them briefly.  Under 
  conditions (I) and (II), the distributions in $\model_i$ share a
  common support and have densities with respect to a dominating
  measure.  The parameter space $\Omega_i$
  in~(\ref{eq:parametrized-model}) is compact and defined by real
  analytic constraints.  (An assumption of compactness is only needed
  when the set of parameter vectors representing the true distribution
  $ \{ \boldsymbol{\omega}_i \in \Omega_i :
  \pi(\boldsymbol{\omega}_i)=\pi_0 \} $ is not already compact.)
  Watanabe's conditions further require that the log-likelihood ratios
  of $\pi_0$ with respect to the distributions
  $\pi(\boldsymbol{\omega}_i)$ can be bounded by a function that is
  square-integrable under $\pi_0$.  Moreover, the log-likelihood
  ratios satisfy a requirement of analyticity that allows for power
  series expansions in $\boldsymbol{\omega}_i$.  Finally, the prior
  distribution $P(\boldsymbol{\omega}_i\,|\,\model_i)$ has a density
  that is the product of a smooth positive function and a nonnegative
  analytic function.
\end{remark}

Watanabe's result applies to models such as reduced-rank regression,
factor analysis, Binomial mixtures, and latent class analysis, which
we will consider in the numerical experiments of
Sections~\ref{sec:numerical} and~\ref{sec:numerical-mix}.  Via
suitable analytic bounds on the log-likelihood ratios, the result can
also be extended to other `non-analytic models', such as mixtures of
normal distributions with known common variance as we considered in
Example~\ref{ex:intro} \cite[Section 7.8]{Watanabe:Book}.  While the
case of Gaussian mixtures with unknown variance has not yet been
treated explicitly in the literature, we show experiments with such
models in Section~\ref{sec:numerical-mix}.

\begin{example}
  \label{ex:intro2}
  Let $\model_2$ be the Gaussian mixture model with $i=2$ components
  that we considered in Example~\ref{ex:intro}.  If $\pi_0$ is a
  normal distribution $\mathcal{N}(\mu,1)$ then
  $\lambda_2(\pi_0)=3/4$.  If $\pi_0$ is an honest mixture of two
  normal distributions with variance 1 then $\lambda_2(\pi_0)=3/2$.
  In either case $m_2(\pi_0)=1$.  The values can be found in Example
  3.1 of \cite{aoyagi:2010}.\footnote{The formula for the multiplicity
    in Theorem 3.4 in \cite{aoyagi:2010} applies only if $r<H$, in the
    notation used there.  If $r=H$, the multiplicity is one, as
    confirmed in private communication with the author.}
\end{example}

Reduced-rank regression, factor analysis and latent class analysis are
all singular submodels of an exponential family, which is either the
normal or the multinomial family.  It follows that the sequence of
likelihood ratios
$P(\mathbf{Y}_n\,|\, \hat\pi_i,\model_i)/P(\mathbf{Y}_n\,|\,
\pi_0,\model_i)$
converges in distribution and, in particular, is bounded in
probability \citep{drton:2009}.  In this case, we can plug the maximum
likelihood estimator into the first term of~(\ref{eq:wata-pi0}) and
obtain that
\begin{multline}
  \label{eq:wata}
  \log L(\model_i) \;=\; 
  \log P(\mathbf{Y}_n\,|\, \hat\pi_i,\model_i) - \lambda_i(\pi_{0}) \log(n) +
  \big[ m_i(\pi_0)-1 \big] \log\log(n) + O_p(1).
\end{multline}
For more complicated models, such as Gaussian mixture models,
likelihood ratios can often be shown to converge in distribution under
compactness assumptions on the parameter space; see for instance
\cite{azais:2006,azais:2009} who also review much of the relevant
literature.  Without compactness, the log-likelihood ratios in mixture
models need not be bounded in probability; e.g., they would be of
order $O_p(\log\log(n))$ for Gaussian mixtures
\citep{hartigan:1985,bickel:1993}.

Having estimated the log-likelihood by estimating the unknown
data-generating distribution $\pi_0$, it seems tempting to similarly
estimate the learning coefficient $\lambda_i(\pi_0)$ and its
multiplicity $m_i(\pi_0)$.  However, in contrast to the likelihood
function, the learning coefficient and multiplicity are not continuous
functions of $\pi_0$.  Hence, substituting an estimate for $\pi_0$ is
of little interest as the resulting expression fails to capture the
behavior of the marginal likelihood at (or near) model singularities.
Instead, we will make the fact from~(\ref{eq:wata}) the point of
departure in the definition of our singular BIC, which is the topic of
Section~\ref{sec:bic:sing}.

As in the original work of \cite{Schwarz:1978}, our general treatment
will focus on prior distributions with smooth densities that are
bounded and positive.  On a compact set, such a density will be
bounded away from zero. In the analytic settings considered in
\cite{Watanabe:Book}, it then holds that $\lambda_i(\pi_0)$ is a
rational number in $[0,d_i/2]$ and $m_i(\pi_0)$ is an integer in
$\{1,\dots, d_i\}$.  However, as mentioned above, priors with
densities that are zero in parts of the parameter space can be
accommodated in the framework as long as the prior
density vanishes in an `analytic fashion'.
In this case, the learning coefficient may depend on the prior
$P(\boldsymbol{\omega}_i\,|\,\model_i)$ in important ways.  In
particular, if the prior has a density that is zero at model
singularities then $\lambda_i(\pi_0)$ could exceed $d_i/2$; compare
the discussion of Jeffrey's prior in Theorem 7.4 in
\cite{Watanabe:Book}.  We will revisit the role of the prior
distribution in experiments with mixture models in
Sections~\ref{sec:gauss-mix} and~\ref{subsec:lca}.

\begin{example}
  \label{ex:red-reg}
  Reduced-rank regression is multivariate linear regression subject to
  a rank constraint on the matrix of regression coefficients
  \citep{reinsel:1998}.  Suppose we observe $n$ independent copies of
  a partitioned zero-mean Gaussian random vector $Y=(Y_R,Y_C)$, where
  $Y_R\in\mathbb{R}^N$ and $Y_C\in\mathbb{R}^M$.  Keeping only with
  the most essential structure, assume that the covariance matrix of
  $Y_C$ and the conditional covariance matrix of $Y_R$ given $Y_C$ are
  both the identity matrix.  The reduced-rank regression model
  $\model_i$ associated to an integer $i\ge 0$ then postulates that
  the $N\times M$ matrix $\pi$ in the conditional expectation
  $\mathbb{E}[Y_R\,|\,Y_C]=\pi Y_C$ has rank at most $i$.
  
  In a Bayesian treatment, consider the parametrization
  $\pi=\boldsymbol{\omega}_{i2}\boldsymbol{\omega}_{i1}$, with smooth
  and positive prior densities for
  $\boldsymbol{\omega}_{i2}\in\mathbb{R}^{N\times i}$ and
  $\boldsymbol{\omega}_{i1}\in\mathbb{R}^{i\times M}$.  Note that
  while the matrix $\pi$ is in one-to-one correspondence with the
  joint distribution of $Y$, this is not true for the pair of matrices
  $\boldsymbol{\omega}_i=(\boldsymbol{\omega}_{i1},\boldsymbol{\omega}_{i2})$
  used to parametrize the model.  For this setup,
  \cite{aoyagi-watanabe:2005} derived the learning coefficients
  $\lambda_i(\pi_0)$ and their multiplicities $m_i(\pi_0)$, where the
  true data-generating distribution is given by an $N\times M$ matrix
  $\pi_0$ of rank $j\le i$.
  In particular, $\lambda_i(\pi_0)$ and $m_i(\pi_0)$ depend on $\pi_0$
  only through the true rank $j$.  

  \begin{table}
    \centering
    \caption{Learning coefficients for reduced-rank regression ($5$
      responses, $3$ covariates): model postulates rank 
      $i$,  true rank is $j$.} 
    \label{tab:learn-coeffs-redreg}
    \begin{tabular}[t]{p{12cm}}
      \[
      \renewcommand\arraystretch{1.2}
      \begin{array}{rcccc}
        \hline\hline
        & j=0 & j=1 & j=2 & j=3\\
        \hline
        i=0 & 0\\
        i=1 & \frac{3}{2} & \frac{7}{2}           & \\
        i=2 & 3 & \frac{9}{2} & 6  \\
        i=3 & \frac{9}{2}  & \frac{11}{2}   &   \frac{13}{2}      
                          & \frac{15}{2}
                            \\
        \hline
        \hline
      \end{array}
      \]
    \end{tabular}
  \end{table}

  For a concrete instance, take $N=5$ and $M=3$.  Then the values of
  $\lambda_{ij}:= \lambda_i(\pi_0)$ are listed in
  Table~\ref{tab:learn-coeffs-redreg}, and the multiplicity
  $m_i(\pi_0)=1$ unless $i=3$ and $j=0$ in which case $m_i(\pi_0)=2$.
  Note that the table entries for $j=i$ are equal to
  $\dim(\model_i)/2$, where $\dim(\model_i)=i(N+M-i)$ is the dimension
  of $\model_i$, which can be identified with the set of $N\times M$
  matrices of rank at most $i$.  The dimension is also the maximal
  rank of the Jacobian of the map
  $(\boldsymbol{\omega}_{i1},\boldsymbol{\omega}_{i2})\mapsto
  \boldsymbol{\omega}_{i2}\boldsymbol{\omega}_{i1}$.
  The singularity issues addressed in Watanabe's theory arise at
  points where the Jacobian of the parametrization fails to have
  maximal rank.  These have
  $\text{rank}(\boldsymbol{\omega}_{i2}\boldsymbol{\omega}_{i1})<i$
  and thus define a distribution that also belongs to a submodel
  $\model_j\subset \model_i$ given by a lower rank $j<i$.  This
  presents a challenge for model selection, which here amounts to
  selection of an appropriate rank.
  
  While the regularity conditions in its derivation are not met, it is
  common practice to apply the standard BIC for selection of the rank.
  In doing so, one typically takes $d_i=\dim(\model_i)$
  in~\eqref{def:bic}.  Simulation studies on rank selection have shown
  that this criterion has a tendency to favor overly small ranks; for
  a recent example see \cite{cheng:2012}.  The quoted values of
  $\lambda_i(\pi_0)$ give a theoretical explanation for this empirical
  phenomenon, as the use of dimension in BIC
  leads to overpenalization of models that contain the true
  data-generating distribution but are not minimal in that regard.
\end{example}

In other models, determining learning coefficients can be a
challenging problem, but progress has been made.  For some of the
examples that have been treated, we refer the reader to
\cite{aoyagi:2010,aoyagi:2010:2,aoyagi:2009,drton:lin:weihs:zwiernik:2014,rusakov:2005,watanabe-amari:2003,watanabe:watanabe:2007,yamazaki-watanabe:2003,yamazaki:2005,yamazaki-watanabe:2004},
and \cite{zwiernik:2011}.
The use of techniques from computational algebra and combinatorics is
emphasized in \cite{Lin:2011}; see also
\cite{arnold-vol2:1988,vasilev:1979}.

Progress in large-sample theory, however, does not readily translate
into practical statistical methodology because one faces the obstacle
that the learning coefficients depend on the unknown data-generating
distribution $\pi_0$, as indicated in our notation in~(\ref{eq:wata}).
For instance, for the problem of selecting the rank in reduced-rank
regression (Example~\ref{ex:red-reg}), the Bayesian measure of model
complexity that is given by the learning coefficient and its
multiplicity depends on the rank we wish to determine in the first
place; recall also Example~\ref{ex:intro} that is about a mixture
model.  It is for this reason that there is currently no statistical
method that takes advantage of theoretical knowledge about the values
of learning coefficients.  In the remainder of this paper, we propose
a solution for how to overcome the problem of circular reasoning and
give a practical extension of the Bayesian information criterion to
singular models.

\section{New Bayesian information criterion for singular models}
\label{sec:bic:sing}

\subsection{Averaging approximations}

As previously stated, our point of departure is the large-sample
result from (\ref{eq:wata}).  If the learning coefficient
$\lambda_i(\pi_0)$ and its multiplicity $m_i(\pi_0)$ that appear in
this equation were known, then we could directly adopt the ideas in
\cite{Schwarz:1978}, omit the remainder term in~(\ref{eq:wata}), and
define a proxy for the marginal likelihood $L(\model_i)$ as
\begin{equation}
  \label{eq:watanabe-approx-pi0}
  L'_{\pi_0}(\model_i) := P(\mathbf{Y}_n\,|\, \hat\pi_i,\model_i)\cdot 
  \frac{(\log n)^{ m_i(\pi_0)-1 }}{n^{\lambda_i(\pi_0)}}.
\end{equation}
However, in practice, $\lambda_i(\pi_0)$ and $m_i(\pi_0)$ are unknown.
We thus propose to apply standard Bayesian thinking and average the
different possible approximations $L'_{\pi_0}(\model_i)$
from~(\ref{eq:watanabe-approx-pi0}) by assigning a probability
measure $Q_i$ to the distributions in model $\model_i$.  In other
words, we eliminate the unknown distribution $\pi_0$ by
marginalization and compute an approximation to $L(\model_i)$ as
\begin{equation}
  \label{eq:BIC-Q}
  L'_{Q_i}(\model_i) := 
  \int_{\model_i} L'_{\pi_0}(\model_i) \;dQ_i(\pi_0).
\end{equation}
The crux of the matter now becomes choosing an appropriate probability
measure $Q_i$.  

\begin{remark}
  \label{prop:sbic=bic}
  Before discussing particular choices for $Q_i$, we stress that any
  choice for $Q_i$ reduces to Schwarz's criterion in the regular case.
  Here, regularity refers to the setting in which model $\model_i$ with
  parametrization
  $\boldsymbol{\omega}_i\mapsto \pi_i(\boldsymbol{\omega}_i)$ has a
  Fisher-information matrix that is invertible at all
  $\boldsymbol{\omega_i}$ in the parameter space $\Omega_i$.  For a
  model with $d_i$ parameters, it then holds that
  $\lambda_i(\pi_0) ={d_i}/{2}$ and $m_i(\pi_0) = 1$ for all
  data-generating distributions $\pi_0\in\model_i$.  Hence,
  $L'_{\pi_0}(\model_i) = e^{\mathrm{BIC}(\model_i) }$ for all
  $\pi_0\in\model_i$.  The integrand in~(\ref{eq:BIC-Q}) being
  constant we have
  \[
  \log L'_{Q_i}(\model_i) = \mathrm{BIC}(\model_i) 
  \]
  irrespective of the choice of $Q_i$.  
\end{remark}

Returning to the singular case, one possible candidate for $Q_i$ is
$P(\pi_0\,|\, \model_i,\mathbf{Y}_n)$, the posterior distribution in
$\model_i$.  Under this distribution, however, the singular models
encountered in practice have the learning coefficient
$\lambda_i(\pi_0)$ almost surely equal to $\dim(\model_i)/2$ with
multiplicity $m_i(\pi_0)=1$; recall
Example~\ref{ex:red-reg}.\footnote{We assume the set $\model_i$
  corresponds to a subset of Euclidean space with well-defined
  dimension.}  We obtain that
\[
\log L'_{Q_i}(\model_i) =  \log P(\mathbf{Y}_n\,|\, \hat\pi_i,\model_i) -
  \frac{\dim(\model_i)}{2}   \log(n),
\]
which is the usual BIC, albeit with the possibility that
$\dim(\model_i)<d_i$, where $d_i$ is the dimension of the parameter
space $\Omega_i$ when $\model_i$ is presented as
in~(\ref{eq:parametrized-model}).  From a pragmatic point of view,
this choice of $Q_i$ is not attractive as it merely recovers the
adjustment from $d_i$ to $\dim(\model_i)$ that is standard practice
when applying Schwarz's BIC to singular models.  More importantly,
however, averaging with respect to the posterior distribution
$P(\pi_0\,|\, \model_i,\mathbf{Y}_n)$ involves conditioning on the
single model $\model_i$, which clearly ignores the model uncertainty
inherent to model selection problems.

In most practical problems, the finite set of models
$\{\model_i : i\in I\}$ has interesting structure with respect to the
partial order given by inclusion.\footnote{In the examples we consider
  in this paper the order is always a total order.  For an example
  where this is not the case, see
  \cite{drton:lin:weihs:zwiernik:2014}.}  For notational convenience,
we define the poset structure on the index set $I$ and write
$i\preceq j$ when $\model_i\subseteq \model_j$.  Instead of
conditioning on a single model, we then advocate the use of the
posterior distribution
\begin{equation}
  \label{eq:Q-post-true}
  Q_i(\pi_0) := P(\pi_0\,|\,\{\model:\model\subseteq \model_i\},\mathbf{Y}_n)
  = 
  \frac{\sum_{j\preceq i}
    P(\pi_0\,|\,\model_j,\mathbf{Y}_n)P(\model_j\,|\,\mathbf{Y}_n)}{ 
    \sum_{j\preceq i} P(\model_j\,|\,\mathbf{Y}_n)}
\end{equation}
obtained by conditioning on the family of all submodels of $\model_i$.
Intuitively, the proposed choice of $Q_i$ introduces the knowledge
that the data-generating distribution $\pi_0$ is in $\model_i$ all the
while capturing remaining posterior uncertainty with respect to
submodels of $\model_i$.  This does not yet escape from the problem of
circular reasoning, since (\ref{eq:Q-post-true}) involves the
posterior probabilities $P(\model_j\,|\, \mathbf{Y}_n)$ that we are
trying to approximate.  However, this problem can be overcome as we
argue in Section~\ref{subsec:singular-bic}.

For simpler notation, let $L'(\model_i):=L'_{Q_i}(\model_i)$ when
$Q_i$ is chosen according to our proposal from (\ref{eq:Q-post-true}).
We obtain from~(\ref{eq:BIC-Q}) and~(\ref{eq:Q-post-true}) that
\begin{equation}
  \label{eq:approx-ave-Qi}
  L'(\model_i) = 
  \frac{1}{\sum_{j\preceq i} P(\model_j\,|\,\mathbf{Y}_n)}\cdot 
  \sum_{j\preceq i} L'_{ij}\, P(\model_j\,|\,\mathbf{Y}_n),
\end{equation}
with
\begin{equation}
  \label{eq:L'ij(Yn)}
  L'_{ij} = P(\mathbf{Y}_n\,|\, \hat\pi_i,\model_i) \cdot
  \Lambda_{ij}(\mathbf{Y}_n)
\end{equation}
and
\begin{equation}
  \label{eq:Penalty_ij(Yn)}
  \Lambda_{ij}(\mathbf{Y}_n) =  \int_{\model_j} 
  \frac{(\log n)^{ m_i(\pi_0)-1 }}{n^{\lambda_i(\pi_0)}}
  \;dP(\pi_0\,|\,\model_j,\mathbf{Y}_n).
\end{equation}
The integral $\Lambda_{ij}(\mathbf{Y}_n)$ is the expectation of a term
measuring the complexity of model $\model_i$ under the posterior
distribution given the submodel $\model_j$.  The integration problem
in~(\ref{eq:Penalty_ij(Yn)}) may seem complicated at this point but,
in fact, for the statistical problems we have in mind, the computation
of (\ref{eq:Penalty_ij(Yn)}) is trivial because the integrand is
(almost surely) constant.  Indeed, all singular model
selection problems we know satisfy the following condition:

\begin{enumerate}
\item[] \label{generic-learn-coffs} For any $i\in I$ and
  $j\preceq i$, there are two constants $\lambda_{ij}$ and $m_{ij}$
  such that
  \begin{equation}
    \label{eq:learn-coeff-almost-sure-constant}
    \lambda_i(\pi_0) = \lambda_{ij} \quad\text{and}\quad
    m_i(\pi_0) = m_{ij}
  \end{equation}
  for all $\pi_0$ in a set $A_{ij}\subseteq\model_j$ with
   $P(A_{ij}\,|\,\model_j,\mathbf{Y}_n)=1$.
\end{enumerate}

  With such generic
values for learning coefficient and multiplicity, the integral
$\Lambda_{ij}(\mathbf{Y}_n)$ does not depend on the data $\mathbf{Y}_n$ and equals
\[
\Lambda_{ij}(\mathbf{Y}_n) = \frac{(\log n)^{ m_{ij}-1 } }{n^{\lambda_{ij}}}.
\]
In this case,
\begin{equation}
  \label{eq:L'ij}
  L'_{ij}
  = P(\mathbf{Y}_n\,|\, \hat\pi_i,\model_i)\cdot  \frac{(\log n)^{ m_{ij}-1 }
  }{n^{\lambda_{ij}}} 
\end{equation}
becomes easy to evaluate in statistical practice.

\begin{example}
  Consider again the reduced-rank regression model from
  Example~\ref{ex:red-reg}.  As mentioned, \cite{aoyagi-watanabe:2005}
  have shown that the learning coefficient $\lambda_i(\pi_0)$ and its
  multiplicity $m_i(\pi_0)$ only depend on the rank $j$ associated
  with $\pi_0$.  Hence, for any pair $j\le i$, there are constants
  $\lambda_{ij}$ and $m_{ij}$ such
  that~(\ref{eq:learn-coeff-almost-sure-constant}) holds for all
  $\pi_0$ in $\model_j\setminus \model_{j-1}$.  The exceptional set
  $\model_{j-1}$ corresponds to the matrices of rank at most $j-1$ and
  is a nullset among the matrices of rank at most $j$.  In
  Table~\ref{tab:learn-coeffs-redreg}, we listed numerical values of
  $\lambda_{ij}$ for the special case of $N=5$ responses and $M=3$
  covariates.

  The fact that~(\ref{eq:learn-coeff-almost-sure-constant}) holds for
  reduced-rank regression would also be clear if we did not know
  explicit formulas for the learning coefficients and their
  multiplicities.
  Consider model $\model_i$, and let $j\le i$.  Our claim is then that
  $\lambda_i(\pi_0)$ and $m_i(\pi_0)$ are functions of $\pi_0$ that
  are constant on a set that has probability one under
  $P(\pi_0\,|\,\model_j,\mathbf{Y}_n)$.  The pair
  $(\lambda_i(\pi_0),m_i(\pi_0))$ is determined by the asymptotics of
  a Laplace integral.  Using that $\model_i$ is a submodel of the
  regular family of \emph{all} Gaussian distributions and that
  $\model_i$ is parametrized by a polynomial map, the phase function
  of the Laplace integral can be taken to be a polynomial; compare
  e.g.~Section 2 in \cite{drton:lin:weihs:zwiernik:2014} or Lemma 1 in
  \cite{aoyagi-watanabe:2005}.  Moreover, this polynomial has
  coefficients that are polynomial functions of $\pi_0$.  When making
  this statement, we identify $\pi_0$ with the $N\times M$ matrix of
  regression coefficients.  By the theory discussed in
  \cite{Watanabe:Book}, if
  $\pi_0=\pi_0(\boldsymbol{\omega}_j)\in\model_j$ then there are
  generic values $\lambda_{ij}$ and $m_{ij}$ such that
  $(\lambda_i(\pi_0),m_i(\pi_0))\not=(\lambda_{ij},m_{ij})$ if and
  only if $\boldsymbol{\omega}_j$ satisfies a polynomial equation
  $g_{ij}(\boldsymbol{\omega}_j)=0$ that does not hold for all points
  in $\Omega_j$.  Here,
  $\Omega_j=\mathbb{R}^{N\times i}\times\mathbb{R}^{i\times M}$ is the
  parameter space of $\model_j$.  Since $g_{ij}$ is a nonzero
  polynomial, its zero set has measure zero by the lemma in
  \cite{okamoto:1973}.  Consequently,
  $(\lambda_i(\pi_0),m_i(\pi_0))=(\lambda_{ij},m_{ij})$ holds almost
  surely under $P(\pi_0\,|\,\model_j,\mathbf{Y}_n)$.
\end{example}

The reasoning just given applies verbatim to the factor analysis model
treated later, and to models for categorical data such as latent class
models or Binomial mixtures.  For mixtures of Gaussians some
additional insights are needed to arrive at a polynomial setup
but~(\ref{eq:learn-coeff-almost-sure-constant}) still holds (Section
7.8 in \citeauthor{Watanabe:Book}, \citeyear{Watanabe:Book}).  We note
that the model from Examples~\ref{ex:intro} and~\ref{ex:intro2} has
$\lambda_{21}=3/4$ and $\lambda_{22}=3/2$.  Outside the algebraic
realm, it is more difficult to make a general statement about generic
values of learning coefficients.  Nonetheless, we
expect~(\ref{eq:learn-coeff-almost-sure-constant}) to hold in all
model selection problems of practical interest; compare also Remark
1.8 in \cite{Watanabe:Book}.

\subsection{Singular BIC}
\label{subsec:singular-bic}

Even if we are able to evaluate all the integrated approximations
$L'_{ij}$ for $j\preceq i$ in the generic situation
from~(\ref{eq:L'ij}), our proposed approximation $L'(\model_i)$
remains impractical because it is a weighted average with the weights
being the posterior model probabilities
$P(\model_j\,|\, \mathbf{Y}_n)$ that
we seek to approximate in the
first place.  To make this fact more transparent, we
rewrite~(\ref{eq:approx-ave-Qi}) using that
$P(\model_j\,|\,\mathbf{Y}_n)\propto L(\model_j)P(\model_j)$, which
gives
\begin{align}
  \label{eq:L-eqn-2}
  L'(\model_i) &= \frac{1}{\sum_{j\preceq i} L(\model_j)P(\model_j)} \cdot
  \sum_{j\preceq i} L'_{ij}\,  L(\model_j)P(\model_j).
\end{align}
We see explicitly that $L'(\model_i)$, the supposed proxy to marginal
likelihood, is a function of the actual marginal likelihood
$L(\model_i)$ as well as the marginal likelihood $L(\model_j)$ of any
submodel indexed by $j\prec i$.  Of course, there would hardly
be any interest in a proxy  $L'(\model_i)$ once the marginal
likelihood $L(\model_i)$ has been computed.

This said, equation~(\ref{eq:L-eqn-2}) also leads to a way out of this
dilemma.  Observe that in~(\ref{eq:L-eqn-2}), the marginal likelihood
for model $\model_i$ appears twice, first in approximation on the left
hand side and then as an exact value on the right side when
considering summation index $j=i$.  This motivates building a
`fix-point equation system' by replacing each marginal likelihood
$L(\model_j)$ on the right hand side of~(\ref{eq:L-eqn-2}) by its
approximation $L'(\model_j)$.  We arrive at the equation system
\begin{equation}
  \label{eq:L'-system-a}
  L'(\model_i)= \frac{1}{\sum_{j\preceq i} L'(\model_j)P(\model_j)} \cdot
  \sum_{j\preceq i} L'_{ij}\,L'(\model_j)P(\model_j), \quad i\in
   I,
\end{equation}
where the $L'_{ij}$ and the $P(\model_j)$ are known constants and the
desired marginal likelihood approximations $L'(\model_i)$ are the
unknowns that we wish to solve for.  We emphasize that
(\ref{eq:L'-system-a}) is not mathematically deduced
from~(\ref{eq:L-eqn-2}), it is simply an equation system that we
heuristically motivated.  Now, if we can solve the nonlinear equation
system in~(\ref{eq:L'-system-a}) and obtain a solution with all
$L'(\model_i)>0$ then we have computed a practical approximation to
the marginal likelihood of each considered model $\model_i$, $i\in I$.

Our next observation is that the equations in~(\ref{eq:L'-system-a})
indeed have a positive solution and that this solution is unique.  To
show this, we clear denominators and consider the polynomial equation
system
\begin{equation}
  \label{eq:L'-system}
  \sum_{j\preceq i} \big[ L'(\model_i)  -
   L'_{ij}\big] L'(\model_j)P(\model_j)=0, \quad i\in
   I.
\end{equation}

\begin{proposition}
  \label{prop:positive-solutions}
  The equation system in (\ref{eq:L'-system}) has a unique solution
  with all unknowns $L'(\model_i)>0$.
\end{proposition}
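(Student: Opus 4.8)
The plan is to exploit the poset structure of the system. The $i$-th equation in~(\ref{eq:L'-system}) involves only the unknowns $L'(\model_j)$ with $j\preceq i$, so the system is ``triangular'' along $\preceq$ and can be solved by induction over a linear extension of $(I,\preceq)$, which exists because $I$ is finite. Write $x_i:=L'(\model_i)$ and $p_i:=P(\model_i)>0$. Isolating in~(\ref{eq:L'-system}) the single term with $j=i$, the $i$-th equation becomes
\[
p_i\,x_i^{2} + \Bigl(\sum_{j\prec i} p_j x_j - L'_{ii}\,p_i\Bigr)x_i
  - \sum_{j\prec i} L'_{ij}\,p_j x_j \;=\; 0,
\]
a quadratic in $x_i$ whose coefficients depend only on the values $x_j$ with $j\prec i$.

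For existence I would fix a linear extension $i_1,i_2,\dots$ of $(I,\preceq)$ and determine $x_{i_1},x_{i_2},\dots$ one at a time. If $i$ is $\preceq$-minimal, the displayed quadratic reduces to $p_i x_i(x_i-L'_{ii})=0$, whose unique positive root is $x_i=L'_{ii}$, and this is positive by~(\ref{eq:L'ij}). If $i$ is not $\preceq$-minimal and all previously chosen $x_j$ with $j\prec i$ are positive, then the quadratic has leading coefficient $p_i>0$ and constant term $-\sum_{j\prec i}L'_{ij}p_jx_j$, which is strictly negative because at least one index $j\prec i$ exists and $L'_{ij},p_j,x_j>0$. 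Hence the product of its two roots, namely $-\bigl(\sum_{j\prec i}L'_{ij}p_jx_j\bigr)/p_i$, is strictly negative, so exactly one root is positive; define $x_i$ to be that root. Carrying this out for all of $I$ produces a solution of~(\ref{eq:L'-system}) with every coordinate positive.

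For uniqueness, suppose $(x_i)_{i\in I}$ and $(y_i)_{i\in I}$ are two solutions of~(\ref{eq:L'-system}) with all coordinates positive. Again argue along the linear extension: for $\preceq$-minimal $i$ both $x_i$ and $y_i$ equal $L'_{ii}$; and if $x_j=y_j$ for all $j\prec i$, then the $i$-th equation, in the form displayed above, is one and the same quadratic for $x_i$ and for $y_i$, both of which are positive, so both coincide with its unique positive root. Hence $x_i=y_i$ for all $i\in I$.

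The only point that really needs care is the bookkeeping that legitimizes the induction: one must check that~(\ref{eq:L'-system}) is genuinely triangular, i.e.\ that the $i$-th equation contains no unknown $x_j$ with $j\not\preceq i$, and that $L'_{ii}$ is well defined for every $i$ (the generic case $j=i$ in the definition~(\ref{eq:L'ij})). Beyond that, the substance of the argument is the observation that at every non-minimal step the constant term of the quadratic is \emph{strictly} negative, which is what forces exactly one positive root rather than a double root at the origin or a pair of nonpositive roots; this is where the positivity of the prior probabilities $P(\model_j)$ and of the quantities $L'_{ij}$ enters.
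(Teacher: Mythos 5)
Your proposal is correct and follows essentially the same route as the paper: induct along the partial order, observe that the $i$-th equation is a quadratic in $L'(\model_i)$ with coefficients determined by the already-solved $L'(\model_j)$ for $j\prec i$, and use the strict negativity of the constant term (the paper's $c_i>0$) to conclude there is exactly one positive root at each step. Your explicit treatment of uniqueness via a linear extension is just a slightly more spelled-out version of the paper's inductive argument.
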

\begin{proof}
  Let $i$ be any minimal element of the poset $I$.  Then $j=i$ is
  the only choice for the index $j$, and the equation from
  (\ref{eq:L'-system}) reads
  \[
  \big[ L'(\model_i)  -
   L'_{ii}\big] L'(\model_i)P(\model_i) = 0.
  \]
  With $P(\model_i)>0$, the equation has the unique positive solution
  \[
  L'(\model_i) = L'_{ii}>0,
  \]
  which coincides with the exponential of the usual BIC for model
  $\model_i$.
  
  Consider now a non-minimal index $i\in I$.  Proceeding by induction,
  assume that positive solutions $L'(\model_j)$ have been computed for
  all $j\prec i$, where $j\prec i$ if $\model_j\subsetneq \model_i$.
  Then $L'(\model_i)$ solves the quadratic equation
  \begin{equation}
    \label{eq:quadratic}
     L'(\model_i)^2 + b_i\cdot L'(\model_i) - c_i =0
  \end{equation}
  with
  \begin{align}
    \label{eq:bi}
    b_i &= -L'_{ii} + \sum_{j\prec i}
    L'(\model_j)\frac{P(\model_j)}{P(\model_i)},
    \\
    \label{eq:ci}
    c_i &=  \sum_{j\prec i} L'_{ij}\, L'(\model_j)\frac{P(\model_j)}{P(\model_i)}.
  \end{align}
  Since $c_i>0$ by the induction hypothesis, 
  (\ref{eq:quadratic}) has the unique positive solution
  \begin{equation}
    \label{eq:sbic:sol}
  L'(\model_i) = \frac{1}{2}\left(-b_i+\sqrt{b_i^2+4c_i} \right).
  \qedhere
  \end{equation}
\end{proof}

Based on Proposition~\ref{prop:positive-solutions}, we make the
following definition in which we consider the equation system
from~(\ref{eq:L'-system}) under the default of a uniform prior on
models, that is, $P(\model_i)=1/|I|$ for $i\in I$.

\begin{definition}
  \label{def:sbic}
  The \emph{singular Bayesian information criterion} for model
  $\model_i$ is
  \[
  \SBIC(\model_i) = \log L'(\model_i),
  \]
  where $(L'(\model_i):i\in I)$ is the unique solution
  to the equation system
  \begin{equation}
    \label{eq:L'-system-uniform}
    \sum_{j\preceq i} \big[ L'(\model_i)  -
    L'_{ij}\big] L'(\model_j)=0, \quad i\in
    I,
  \end{equation}
  that has all components positive.
\end{definition}

According to~(\ref{eq:L-eqn-2}), $\SBIC(\model_i)$ is the logarithm of
a weighted average of the approximations $L_{ij}'$, with the weights
depending on the data.  As discussed in Section~\ref{sec:background},
for priors with smooth and positive densities it holds that
$\lambda_i(\pi_0)\le \dim(\model_i)/2$ and $m_i(\pi_0)\ge 1$ for all
$\pi_0\in\model_i$.  Assuming $n\ge 3$, this implies that
\[
\frac{n^{\lambda_i(\pi_0)}}{ (\log n)^{m_i(\pi_0)-1}} \le n^{\dim(\model_i)/2}.
\]
Consequently, the singular BIC is of the form
\[
\SBIC(\model_i) = \log P(\mathbf{Y}_n\,|\, \hat\pi_i,\model_i) - \text{penalty}(\model_i),
\]
where $\text{penalty}(\model_i)$ is a data-dependent penalty that satisfies
\[
\text{penalty}(\model_i)\le \dim(\model_i)/2\cdot \log(n)
\]
and thus is milder than that in the usual BIC.

\begin{remark}
  \label{rem:nonuniform}
  While we envision that the use of a uniform prior on models in
  Definition~\ref{def:sbic} is reasonable for many applications,
  deviations from this default can be of interest; compare, for
  instance, \cite{Nobile:2005} who discusses priors for the number of
  components in mixture models.  Via equation
  system~(\ref{eq:L'-system}), a non-uniform prior on models can be
  readily incorporated in the definition of the singular BIC.  Later
  large-sample results would not be affected.
\end{remark}

\begin{remark}
  \label{rem:sbic-bounds}
  The sBIC defined by~(\ref{eq:L'-system-uniform}) is a function of
  the approximations $L_{ij}'$ from~(\ref{eq:L'ij}), which in turn
  depend only on the maxima of the likelihood functions and the
  numbers $\lambda_{ij}$ and $m_{ij}$.  In our treatment so far the
  $\lambda_{ij}$ are learning coefficients and the $m_{ij}$ their
  multiplicities; recall~(\ref{eq:learn-coeff-almost-sure-constant}).
  However, as we will see for applications discussed in
  Section~\ref{sec:numerical-mix}, interesting versions of sBIC also
  arise when setting the $\lambda_{ij}$ and $m_{ij}$ equal to 
  bounds on learning coefficients and multiplicities, respectively.
\end{remark}

\section{Large-sample properties}
\label{sec:large-sample}

As mentioned in the introduction, Schwarz's BIC with its
dimension-based penalty has been shown to be consistent in a number of
settings, including many singular model selection problems.
Theorem~\ref{thm:sbic-consistent} in this section asserts similar
consistency for the singular BIC from Definition~\ref{def:sbic}.  We
then proceed to show that $\SBIC$ possesses the properties we set out
to obtain.  Indeed, by Proposition~\ref{prop:true-vs-true}, the
data-dependent penalty in $\SBIC$ successfully adapts to the
data-generating distribution, meaning that in large samples the
penalty that $\SBIC$ assigns to a true model $\model_i$ agrees with
the penalty obtained from the (in practice unknown) learning
coefficient $\lambda_i(\pi_0)$ and its multiplicity $m_i(\pi_0)$.  As
stated in Theorem~\ref{thm:bayes}, it follows that the $\SBIC$ is
indeed Bayesian in the sense that it deviates from the log-marginal
likelihood by terms that are bounded in probability.

\subsection{Setup and assumptions}

We consider a finite set of models $\{\model_i: i\in I\}$ that
is closed under intersection.
Fix a
data-generating distribution $\pi_0\in\bigcup_{i\in I} \model_i$.  A
model $\model_{i}$ is \emph{true} if $\pi_0\in\model_{i}$.
Otherwise, $\model_i$ is \emph{false}.  Since the set of model is
closed under intersection, there is a unique \emph{smallest true}
model, which we denote by $\model_{i_0}$ for index $i_0\in I$.

Throughout this section, we assume that Watanabe's result
from~(\ref{eq:wata-pi0}) holds with generic learning coefficients
$\lambda_{ij}$ and multiplicities $m_{ij}$ as
in~(\ref{eq:learn-coeff-almost-sure-constant}).  Then
$\SBIC(\model_i)$ is computed from the approximations in~(\ref{eq:L'ij}),
where
\begin{equation}
  \label{eq:Bayes-complexity-factor}
  \frac{n^{\lambda_{ij}}}{(\log n)^{m_{ij}-1}},
\end{equation}
acts as a measure of complexity of model $\model_i$.  We refer to this
measure of complexity as the \emph{(generic) Bayes complexity} of
$\model_i$ along its submodel $\model_j$.  Let $\le$ denote the
lexicographic order on $\mathbb{R}^2$, that is, $(x_1,y_1)\le
(x_2,y_2)$ if $x_1<x_2$ or if $x_1=x_2$ and $y_1\le y_2$.  Then two Bayes
complexities are ordered as
\[
\frac{n^{\lambda_{1}}}{(\log n)^{m_{1}-1}} \le \frac{n^{\lambda_{2}}}{(\log
  n)^{m_{2}-1}} 
\]
for all large $n$ if and only if $(\lambda_1,-m_1)\le (\lambda_2,-m_2)$.

\smallskip In order to present a general result, we make the following
assumptions about the behavior of likelihood ratios and the learning
coefficients and their multiplicities, under a fixed data-generating
distribution $\pi_0$:

\begin{enumerate}[label=(A\arabic*)]
\item \label{A1} For any two true models $\model_i$ and $\model_k$,
  the sequence of likelihood ratios
  \[
  \frac{P(\mathbf{Y}_n\,|\,\hat\pi_k,\model_k)}{P(\mathbf{Y}_n\,|\,\hat\pi_i,\model_i)}
  \]
  is bounded in probability 
  as $n\to\infty$.
\item \label{A2} For any pair of a true model $\model_i$ and a false
  model $\model_k$, there is a constant $\delta_{ik}>0$ such that with
  probability tending to 1 as $n\to\infty$, we have that
  \[
  \frac{P(\mathbf{Y}_n\,|\,\hat\pi_k,\model_k)}{P(\mathbf{Y}_n\,|\,\hat\pi_i,\model_i)}
  \le e^{-\delta_{ik} n}.
  \]
\item \label{A3} The generic Bayes complexities are increasing with
  model size in the sense that for any model indices $i,k\in I$ and
  submodel indices $j,l\in I$, we have that
  \begin{align*}
       \qquad\qquad 
       (\lambda_{ij},-m_{ij}) &<
    (\lambda_{kj},-m_{kj}) &&\text{if}\quad j\preceq i\prec k,\quad
    \text{and}\qquad\qquad\\
    (\lambda_{il},-m_{il}) &<
    (\lambda_{ij},-m_{ij}) &&\text{if}\quad l\prec j\preceq i.
  \end{align*}
\end{enumerate}

The reader is accustomed with assumptions \ref{A1} and \ref{A2} from
any treatment of consistency of Schwarz's BIC.  Assumption \ref{A1},
the more subtle of the two conditions, holds in problems that involve
possibly singular submodels of exponential families and other
well-behaved models.  In such problems, the likelihood ratios in
\ref{A1} typically converge to a limiting distribution
\citep{drton:2009}.  Examples are Gaussian models such as reduced-rank
regression and factor analysis, but also latent class and other models
for categorical data.  As also mentioned when discussing
equation~(\ref{eq:wata}), the sequence of likelihood ratios for
mixture models is typically bounded in probability when the parameter
space is compact; without compactness the sequence need not be
bounded.  For Gaussian mixtures, for instance, the log-likelihood
ratios could be of the same $\log\log (n)$ order that the
multiplicities $m_i(\pi_0)$ have an effect on
\citep{hartigan:1985,bickel:1993}.

The first set of inequalities in assumption \ref{A3} pertains to a
fixed (generic) data-generating distribution in $\model_j$ and makes
the natural requirement that among any two true models $\model_i$ and
$\model_k$, the larger model, which is taken to be $\model_k$, has the
larger Bayes complexity.  The second set of inequalities in \ref{A3}
requires that the Bayes complexity of a fixed model $\model_i$
decreases when the data-generating distribution is moved from a
generic member of a submodel $\model_j$ to a generic member of
$\model_l\subsetneq\model_j$.  Indeed, the parameters of singular
models are typically `less identifiable' at special distributions that
correspond to smaller submodels, and the second set of inequalities
quantifies such a property.  The inequalities from \ref{A3} hold in
all the aforementioned examples for which learning coefficients have
been computed; in particular, the assumption holds for the
applications we will treat later including reduced-rank regression
from Example~\ref{ex:red-reg}.

\subsection{Consistency}

Our first result clarifies that the singular BIC selects the smallest
true model in the large sample limit.  We emphasize that we fix a
data-generating distribution $\pi_0$ and then consider large-sample
limits.

\begin{theorem}
  \label{thm:sbic-consistent}
  Let $\model_{i_0}$ be the smallest true model, and let
  $\model_{\hat\imath}$ be the model selected by maximizing the
  singular BIC, that is,
  \[
  \hat\imath = \arg\max_{i\in I} \;\SBIC(\model_i).
  \]
  Under assumptions \ref{A1}-\ref{A3}, the probability that
  $\hat\imath=i_0$  tends to 1
  as $n\to\infty$. 
\end{theorem}

\begin{remark}
  The consistency result in Theorem~\ref{thm:sbic-consistent} does
  not rely on the $\lambda_{ij}$ being learning coefficients.  Indeed,
  consistency holds for any version of sBIC that is based on numbers
  $\lambda_{ij}$ and $m_{ij}$ that satisfy assumption~\ref{A3}.  We
  will explore this in the applications in
  Section~\ref{sec:numerical-mix}, where $\lambda_{ij}$ and $m_{ij}$
  will be bounds on learning coefficients and their multiplicities,
  respectively; recall also Remark~\ref{rem:sbic-bounds}.
\end{remark}

Since we are concerned with a finite set of models
$\{\model_i: i\in I\}$, the consistency result in
Theorem~\ref{thm:sbic-consistent} can be established by pairwise
comparisons.  More precisely, it suffices to show that (i) the
singular BIC of any true model is asymptotically larger than that of
any false model, and (ii) the singular BIC of a true model can be
asymptotically maximal only if the model is the smallest true model.
The comparisons (i) and (ii) are addressed in
Propositions~\ref{prop:false} and~\ref{prop:true-vs-true},
respectively.  Throughout, $(L'(\model_i):i\in I)$ refers to the
unique positive solution of~(\ref{eq:L'-system-uniform}), that is, 
$\log L'(\model_i)=\SBIC(\model_i)$.

\begin{proposition}
  \label{prop:false}
  Under assumption \ref{A2}, if model $\model_i$ is true and model
  $\model_k$ is false, then the probability that
  $\SBIC(\model_i)>\SBIC(\model_k)$ tends to 1 as
  $n\to\infty$.
\end{proposition}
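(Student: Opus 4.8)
The plan is to show that $\SBIC(\model_i) - \SBIC(\model_k) \to \infty$ in probability, which is equivalent to showing $L'(\model_i)/L'(\model_k) \to \infty$. The natural strategy is to bound $L'(\model_i)$ from below and $L'(\model_k)$ from above, both relative to the maximized likelihoods, and then invoke assumption \ref{A2} to control the likelihood ratio between the false model $\model_k$ and the true model $\model_i$. The key structural fact I would exploit is the formula~(\ref{eq:L-eqn-2}): each $L'(\model_m)$ is a \emph{weighted average} of the quantities $L'_{mj} = P(Y_n\,|\,\hat\pi_m,\model_m)\cdot n^{-\lambda_{mj}}(\log n)^{m_{mj}-1}$ over $j\preceq m$, with nonnegative weights summing to one. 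Averages of positive quantities lie between the smallest and largest, so $L'(\model_m)$ is sandwiched between $\min_{j\preceq m} L'_{mj}$ and $\max_{j\preceq m} L'_{mj}$. Since $\lambda_{mj}\le \dim(\model_m)/2$ and $m_{mj}\ge 1$, and since there are only finitely many indices, this gives (for $n\ge 3$)
\[
P(Y_n\,|\,\hat\pi_m,\model_m)\cdot n^{-\dim(\model_m)/2}
\;\le\; L'(\model_m) \;\le\; P(Y_n\,|\,\hat\pi_m,\model_m)\cdot n^{-\lambda_m^{\min}}(\log n)^{m_m^{\max}-1},
\]
where $\lambda_m^{\min}=\min_{j\preceq m}\lambda_{mj}$ and $m_m^{\max}=\max_{j\preceq m}m_{mj}$ are fixed constants depending only on the model structure, not on the data.

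With these bounds in hand, I would form the ratio
\[
\frac{L'(\model_i)}{L'(\model_k)}
\;\ge\;
\frac{P(Y_n\,|\,\hat\pi_i,\model_i)\cdot n^{-\dim(\model_i)/2}}
     {P(Y_n\,|\,\hat\pi_k,\model_k)\cdot n^{-\lambda_k^{\min}}(\log n)^{m_k^{\max}-1}}
\;=\;
\frac{P(Y_n\,|\,\hat\pi_i,\model_i)}{P(Y_n\,|\,\hat\pi_k,\model_k)}
\cdot n^{\lambda_k^{\min}-\dim(\model_i)/2}(\log n)^{1-m_k^{\max}}.
\]
By assumption~\ref{A2}, with probability tending to one the leading likelihood-ratio factor is at least $e^{\delta_{ik}n}$ for some fixed $\delta_{ik}>0$. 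The remaining factor $n^{\lambda_k^{\min}-\dim(\model_i)/2}(\log n)^{1-m_k^{\max}}$ is merely polynomial in $n$ times a power of $\log n$, hence is dominated by $e^{\delta_{ik}n}$ for large $n$. Therefore $L'(\model_i)/L'(\model_k)\to\infty$ in probability, which is exactly the claim that $\SBIC(\model_i)>\SBIC(\model_k)$ with probability tending to one.

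The one point requiring a little care is justifying that $L'(\model_m)$ really is a convex combination of the $L'_{mj}$ with weights in $[0,1]$: equation~(\ref{eq:L-eqn-2}) as written uses the \emph{true} marginal likelihoods $L(\model_j)$ as weights, whereas $\SBIC$ is defined via the fixed-point system~(\ref{eq:L'-system}) in which $L(\model_j)$ is replaced by the solution $L'(\model_j)$. I would instead argue directly from the proof of Proposition~\ref{prop:positive-solutions}: by induction on the poset, $L'(\model_m)$ satisfies the quadratic~(\ref{eq:quadratic}), from which one checks $L'_{mm}\ge L'(\model_m)$ whenever $b_m\ge 0$ is possible, but more robustly one simply notes that~(\ref{eq:L'-system-a}) exhibits $L'(\model_m)$ as the average $\sum_{j\preceq m} L'_{mj}\,w_{mj}$ with $w_{mj}=L'(\model_j)P(\model_j)/\sum_{l\preceq m}L'(\model_l)P(\model_l)\in[0,1]$ and $\sum_j w_{mj}=1$, valid because all $L'(\model_l)>0$ by Proposition~\ref{prop:positive-solutions}. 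This is the main (and only) obstacle, and it is routine once one works from~(\ref{eq:L'-system-a}) rather than~(\ref{eq:L-eqn-2}); everything else is the elementary observation that an exponential beats a polynomial.
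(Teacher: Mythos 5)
Your proof is correct and follows essentially the same route as the paper's: both arguments sandwich $L'(\model_i)$ and $L'(\model_k)$ between the extreme values of the weighted averages in~(\ref{eq:L'-system-a}) (valid since Proposition~\ref{prop:positive-solutions} guarantees positive weights summing to one), and then let the exponential decay from~\ref{A2} absorb the fixed polynomial-in-$n$ and $\log n$ factors. Your explicit care over why the weighted-average representation holds for the solution $L'$ rather than the true marginal likelihoods is exactly the point the paper uses implicitly, so there is nothing to add.
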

\begin{proof}
  Fix an index $j\preceq i$ and a second index $l\preceq k$.  Since
  $\model_k$ is false, \ref{A2} implies that the ratio
  ${L_{kl}'}/{L_{ij}'}$ converges to zero in probability as
  $n\to\infty$, i.e., ${L_{kl}'}=o_p(L_{ij}')$.  Since $j$ was
  arbitrary, ${L_{kl}'}=o_p(L_{i\min}')$, where
  \[
  L_{i\min}'=\min\{
  L_{ij}': j\preceq i\};
  \]
  note that for fixed $i$ and varying $j$ the approximations $L_{ij}'$
  share the likelihood term and differ only in the learning
  coefficients or their multiplicities.

  According to~(\ref{eq:L'-system-a}), $L'(\model_k)$ is a weighted
  average of the terms $L_{kl}'$ with $l\preceq k$.  We obtain that
  \begin{equation}
    \label{eq:average-smaller-than-sum}
    L'(\model_k)\,\le\, \max\{ L_{kl}': l\preceq k\} \,=\, o_p(L_{i\min}') .
  \end{equation}
  Similarly, $L'(\model_i)$ is a weighted average of the $L_{ij}'$,
  $j\preceq i$, and it thus holds that
  \begin{equation}
    \label{eq:greater}
    L'(\model_i)\ge L_{i\min}'>0.
  \end{equation}
  We conclude that
  \begin{equation}
    \label{eq:false:small:op}
    L'(\model_k)=o_p(L'(\model_i)).
  \end{equation}
  It follows that $L'(\model_i)>L'(\model_k)$ with probability tending
  to 1 as $n\to\infty$, which yields the claim because
  $\SBIC(\model_i)=\log L'(\model_i)$.
\end{proof}

\begin{proposition}
  \label{prop:true-vs-true}
  Let $\model_i$ be a true model.  Then under assumptions \ref{A1}-\ref{A3},
  \[
  \SBIC(\model_i)=\log(L'_{ii_0}) +o_p(1),
  \] 
  and thus for all $i\succ i_0$, with probability tending to 1 as $n\to\infty$,
  \[
  \SBIC(\model_i)<\SBIC(\model_{i_0}).
  \]
\end{proposition}
\begin{proof}
  First note that under assumption \ref{A3} the second assertion is a
  straightforward consequence of the first;
  compare~(\ref{eq:larger-model-larger-complexity}) below. By
  exponentiating, the first assertion is seen to be equivalent to
  \[
  L'(\model_i) = L'_{ii_0}(1+o_p(1)), \quad i\succeq i_0.
  \]
  We will argue by induction on $i$.  

  To establish the base for the induction, consider the smallest
  true model, that is, $i=i_0$.  Let $j\prec i_0$.  Then we know
  from~(\ref{eq:false:small:op}) that
  $L'(\model_j)=o_p(L'(\model_{i_0}))$.  Using the exponentially fast
  decay of the ratio in~\ref{A2}, the arguments in the proof of
  Proposition~\ref{prop:false} also yield that
  $L'(\model_j)f(n)=o_p(L'(\model_{i_0}))$ for any polynomial $f(n)$.
  Since $L_{i_0j}'/L_{i_0\min}'$ is a deterministic function that
  grows at most polynomially with $n$, and since $L_{i_0\min}'\le
  L'(\model_{i_0})$ according to~(\ref{eq:greater}), we have
  \begin{equation}
    \label{eq:b-c-small}
    L_{i_0j}' L'(\model_j)=o_p(L'(\model_{i_0})^2).
  \end{equation}
  Applying these observations to the coefficients $b_{i_0}$ and
  $c_{i_0}$ from~(\ref{eq:bi}) and~(\ref{eq:ci}), we obtain that
  $c_{i_0}=o_p(L'(\model_{i_0})^2)$ and
  $b_{i_0}+L_{i_0i_0}'=o_p(L'(\model_{i_0}))$.  From the quadratic 
  equation defining $L'(\model_{i_0})$, we deduce that
  \begin{equation}
    \label{eq:quadratic-ignore-false}
    L'(\model_{i_0})^2-L_{i_0i_0}' \cdot L'(\model_{i_0}) = 
    o_p(L'(\model_{i_0})^2).
  \end{equation}
  Hence, the equation's positive solution satisfies 
  our claim, namely,
  \begin{equation}
    \label{eq:smallest-true-schwarz}
    L'(\model_{i_0}) = L_{i_0i_0}'(1+o_p(1)).
  \end{equation}

  For the induction step, assume that the claim is true for proper
  submodels of $\model_i$, that is,
  \[
  L'(\model_k) = L'_{ki_0}(1+o_p(1)), \quad i_0\preceq k \prec i.
  \]
  Further note that arguing similarly as for $i=i_0$, the
  contributions of false models to the coefficients $b_i$ and $c_i$
  from~(\ref{eq:bi}) and~(\ref{eq:ci}) are seen to negligible.  We
  thus have
  \begin{align*}
    b_i &= -L'_{ii} + \Bigg[\sum_{i_0\preceq j\prec i} L'(\model_j)\Bigg](1+o_p(1))
    \;=\; -L'_{ii} + \Bigg[\sum_{i_0\preceq j\prec i}
      L'_{ji_0}\Bigg](1+o_p(1))
  \end{align*}
  and 
  \begin{align*}
    c_i     &= \Bigg[\sum_{i_0\preceq j\prec i} L'_{ij}
      L'(\model_j)\Bigg](1+o_p(1))\;
    =\; \Bigg[\sum_{i_0\preceq j\prec i} L'_{ij}
      L'_{ji_0}\Bigg](1+o_p(1)).
  \end{align*}
  By assumptions~\ref{A1} and \ref{A3}, 
  \begin{equation}
    \label{eq:larger-model-larger-complexity}
  L'_{ki_0} = o_p(L'_{i_0i_0}), \qquad i_0\preceq k \prec i,
  \end{equation}
  and also
  \[
  L'_{ij} = o_p(L'_{ii_0}), \qquad i_0\preceq j \preceq i.
  \]
  We obtain that
  \begin{align*}
    b_i &= -L'_{ii} + L'_{i_0i_0}(1+o_p(1)) = L'_{i_0i_0}(1+o_p(1))
  \end{align*}
  and
  \begin{align*}
    c_i &= L'_{ii_0} L'_{i_0i_0}(1+o_p(1)).
  \end{align*}
  Consequently,
  \begin{align*}
    L'(\model_i) &= \frac{1}{2} \left( -b_i +\sqrt{ b_i^2+4c_i}\right)
    \\
    &= \frac{1}{2} \left(-L'_{i_0i_0}+\sqrt{(L'_{i_0i_0})^2+4L'_{ii_0}
        L'_{i_0i_0}}\;\right)(1+o_p(1))\\
    &= \frac{1}{2}
    \left(-L'_{i_0i_0}+\sqrt{(L'_{i_0i_0})^2+4L'_{ii_0} 
        L'_{i_0i_0} + (2L'_{ii_0})^2}\;\right)(1+o_p(1)),
  \end{align*}
  where the last equality follows from $L'_{ii_0} = o_p(L'_{i_0i_0})$.
  However, this is what was to be shown because
  \begin{equation*}
    \frac{1}{2}
    \left(-L'_{i_0i_0}+\sqrt{(L'_{i_0i_0})^2+4L'_{ii_0} 
        L'_{i_0i_0} + (2L'_{ii_0})^2}\,\right)
     = L'_{ii_0}.
     \qedhere
  \end{equation*}
\end{proof}

\begin{remark}
  While we do not pursue this here, it would be interesting to
  establish further consistency properties for sBIC.  For instance,
  one could seek to adapt the results in \cite{Gassiat:2013} to give
  strong consistency results for sBIC.  \cite{Gassiat:2013} consider
  general information criteria for order selection, that is, for
  problems in which the set of models is totally ordered by inclusion
  (as in mixture modeling or factor analysis).  No upper bound on the
  number of such models is assumed in their work.
\end{remark}

\subsection{Connection to marginal likelihood}
\label{sec:bayes}

Under assumption~\ref{A2}, the marginal likelihood of a false model is
with high probability exponentially smaller than that of any true
model.  The frequentist large-sample behavior of Bayesian model
selection procedures is thus primarily dictated by the asymptotics of
the marginal likelihood integrals of true models.

As pointed out in Section~\ref{sec:bic:sing}, the usual BIC
from~\eqref{def:bic} with penalty depending solely on model dimension
generally does not reflect the asymptotic behavior of the marginal
likelihood of a true model that is singular, which is given
by~(\ref{eq:watanabe-approx-pi0}).  Consequently, as the sample size
increases, the Bayes factor obtained by forming the ratio of the
marginal likelihood integrals for two true models may in- or decrease
at a rate that is different from the rate for an approximate Bayes
factor formed by exponentiating the difference of the two respective
BIC scores.  Hence, there is generally nothing Bayesian about the
usual BIC in singular model selection problems.  In contrast, the new
singular BIC is connected to the large-sample behavior of the
log-marginal likelihood.

\begin{theorem}
  \label{thm:bayes}
  Let $\model_i$ be a true model, let $\model_{i_0}$ be the smallest
  true model, and let $\pi_0$ be a generic distribution in
  $\model_{i_0}$.  Then under assumptions~\ref{A1}-\ref{A3}, the
  marginal likelihood of $\model_i$ satisfies
  \[
  \log L(\model_i) = \SBIC(\model_i) +O_p(1).
  \]
\end{theorem}
\begin{proof}
  By Proposition~\ref{prop:true-vs-true} and~(\ref{eq:L'ij}),
  \begin{align*}
  \SBIC(\model_i) &= \log(L'_{ii_0})+ o_p(1) \\
  &= \log P(\mathbf{Y}_n\,|\,
  \hat\pi_i,\model_i) - \lambda_{ii_0} \log(n) + \big( m_{ii_0}-1
  \big) \log\log(n) + o_p(1).
  \end{align*}
  By~(\ref{eq:learn-coeff-almost-sure-constant}),
  \[
  \lambda_i(\pi_0) = \lambda_{ii_0}, \quad m_i(\pi_0) = m_{ii_0}.
  \]
  The claim thus follows from~(\ref{eq:wata}), which in turn follows
  from Watanabe's result~(\ref{eq:wata-pi0}) and assumption \ref{A1}.
\end{proof}

\section{Applications in multivariate analysis}
\label{sec:numerical}

We apply $\SBIC$ to two singular model selection problems arising in
multivariate analysis.  First, we consider the problem of selecting
the rank of the matrix of regression coefficients in reduced-rank
regression and perform a simulation study that illustrates consistency
properties.  Second, we treat the problem of selecting the number of
factors in factor analysis and work with a well-known data set to show
how $\SBIC$ can lead to an improved assessment of model uncertainty.
For a third application of $\SBIC$ in multivariate analysis, we point
the reader to \cite{drton:lin:weihs:zwiernik:2014} who treat Gaussian
latent forest models with similar findings that for the examples we
report on here.

\subsection{Rank selection}
\label{subsec:redrank}

We take up the setting of reduced-rank regression from
Example~\ref{ex:red-reg} and \cite{aoyagi-watanabe:2005}.  We consider
a scenario with $N=10$ responses and $M=15$ covariates.  We randomly
generate an $N\times M$ matrix of regression coefficients $\pi$ of
fixed rank $5$.  More precisely, we fix the signal strength by fixing
the nonzero singular values of $\pi$ to be 1.2, 1.0, 0.8, 0.6 and 0.4.
The matrix $\pi$ is then obtained by drawing the left- and the
right-singular vectors according to the Haar measures on the two
relevant Stiefel manifolds.  Given $\pi$, we generate $n$ independent
and identically distributed normal random vectors according to the
reduced-rank regression model, as specified in
Example~\ref{ex:red-reg}.  Rank estimates are then obtained by
maximizing Schwarz's BIC or the new $\SBIC$, respectively.  For each
value of $n$, we run 200 simulations with varying $\pi$.

\begin{figure}[t]
  \begin{center}
    \includegraphics[width=4.55cm]{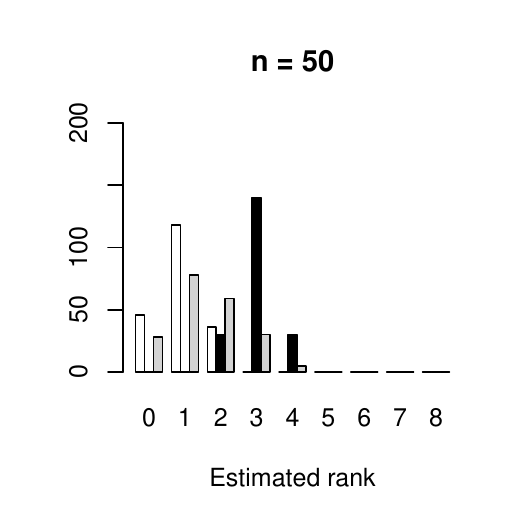}
    \includegraphics[width=4.55cm]{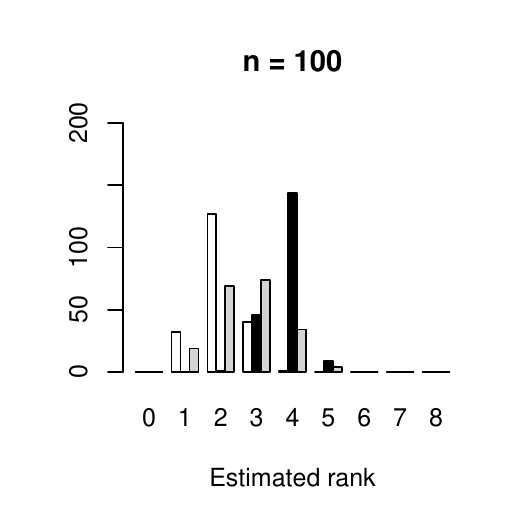}
    \includegraphics[width=4.55cm]{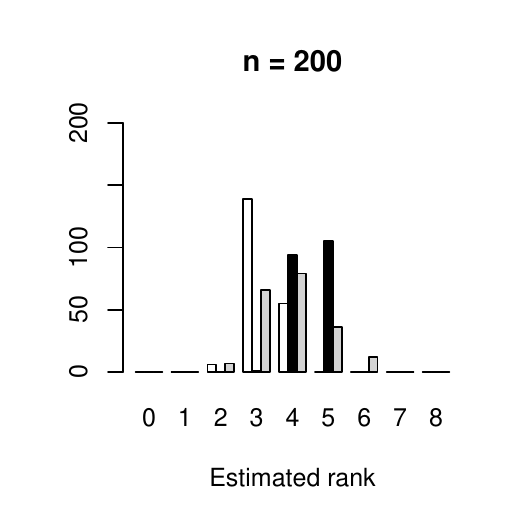}
    \includegraphics[width=4.55cm]{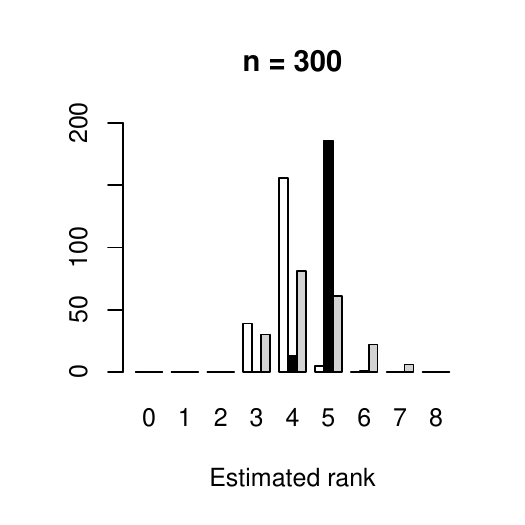}
    \includegraphics[width=4.55cm]{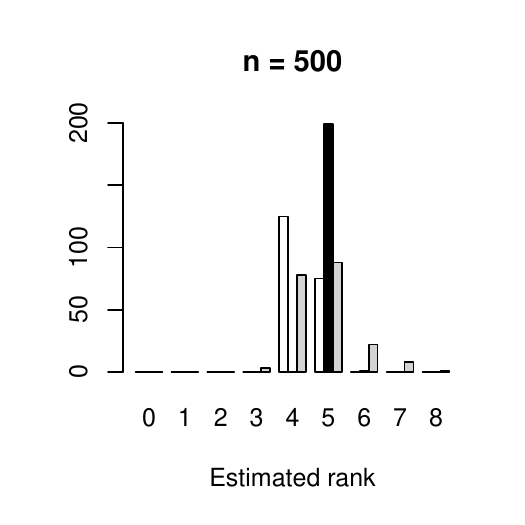}
    \includegraphics[width=4.55cm]{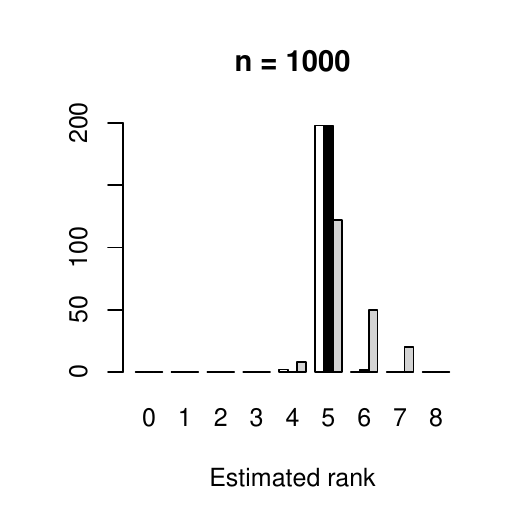}
  \end{center}\centering
  \caption{Frequencies of rank estimates in reduced-rank regression using
    Schwarz's BIC (white), WBIC (grey) and $\SBIC$ (black). Results from
    200 simulations with $10\times 15$ matrices of true rank
    5.}
  \label{fig:redreg-simulation}
\end{figure}

In our simulations, we also consider the Widely Applicable Bayesian
Information Criterion (WBIC) of \cite{watanabe:2012}.  The point of
departure in the derivation of this criterion is the fact that the
marginal likelihood can be computed by thermodynamic integration; see
also \cite{friel:2008}.  Watanabe then analyzes the large-sample
properties of the mean value obtained by applying the mean value
theorem to the thermodynamic integral.  The analysis shows that for
many models and large enough sample size $n$, the temperature at which
the mean value arises can be approximated by $\log(n)$.  We computed
WBIC for reduced-rank regression using a Metropolis-Hastings sampler
for which we adapt the computer code available on Sumio Watanabe's
website\footnote{\tt
  http://watanabe-www.math.dis.titech.ac.jp/users/swatanab/wbic2012e.html}.

We would like to stress that WBIC is not a direct competitor to our
$\SBIC$.  WBIC does not use/require knowledge of the learning
coefficients, and its computation involves integration as opposed to
the maximization in $\SBIC$.  Another important difference is that
WBIC involves an explicit choice of a prior on model parameters, where
as $\SBIC$ depends on the prior only through learning
coefficients.  The prior distribution in the code we use for WBIC has
the entries of the two matrices $\boldsymbol{\omega}_{i1}$ and
$\boldsymbol{\omega}_{i2}$ i.i.d~normal with mean zero and standard
deviation 10.  We tuned the standard deviations for the normal
distributions used for proposals in a random walk to 0.015.  Running
the sampler for 10,000 steps after 1,000 steps of burn-in gave average
acceptance rates that remained in the range from 0.1 to 0.9.

The results of the simulations are shown in
Figure~\ref{fig:redreg-simulation}, in which the new $\SBIC$ is seen
to have good rank selection properties in finite samples.  For
instance, for a sample size of $n = 300$, $\SBIC$ identifies the true
rank 5 in the vast majority of cases whereas the usual BIC selects a
rank of 3 or 4 in virtually all cases.
At $n=1000$, BIC and $\SBIC$ are perfect, with the exception of two
cases in which $\SBIC$ selects rank 6 and two cases in which BIC
selects rank 4.  The behavior of the implemented version of WBIC is
somewhat different with the ranks selected having greater variance.

Our main conclusion is that $\SBIC$ yields an improvement over the
standard dimension-based BIC in terms of frequentist rank selection
properties.  In this simulation study, $\SBIC$ also performs well
compared to WBIC but the rank selection properties of WBIC could
certainly be improved by tuning the involved prior distributions to
the problem at hand, as opposed to employing the defaults from the
computer code we applied.  Our conclusion from the comparison to WBIC
is simply that $\SBIC$ can achieve state-of-the-art performance in
rank selection.

\subsection{Factor analysis}
\label{subsec:factanal}

\citet[\S6.3]{Lopes:2004} fit factor analysis models to data
$\mathbf{Y}_n$ concerning changes in the exchange rates of 6
currencies relative to the British pound.  The sample size is $n=143$.
We write $\model_i$ for the factor analysis model with $i$ factors,
which in this example comprises multivariate normal distributions for
a random vector taking values in $\mathbb{R}^6$.  The distributions in
$\model_i$ have an arbitrary mean vector but their covariance matrix
is constrained to be of the form
$\boldsymbol{\Sigma}+\boldsymbol{\beta}\boldsymbol{\beta}'$, where
$\boldsymbol{\Sigma}$ is a diagonal matrix with positive entries and
$\boldsymbol{\beta}$ is a real $6\times i$ matrix.  This particular
covariance structure arises from conditional independence of the 6
observed random variables given $i$ latent factors.

\cite{Lopes:2004} restrict the number of factors $i$ to at most 3, so
as to not overparametrize the $6\times 6$ covariance matrix.  Their
Tables 3 and 5 report the following two sets of posterior model
probabilities obtained from Markov chain Monte Carlo computation:
  \begin{align}
    \label{eq:lopeswest:rjmcmc1}
    P(\model_1\,|\,\mathbf{Y}_n)&=0.00,& P(\model_2\,|\,\mathbf{Y}_n)&=0.88,&
    P(\model_3\,|\,\mathbf{Y}_n)&= 0.12
  \end{align}
  and 
  \begin{align}
    \label{eq:lopeswest:rjmcmc2}
    P(\model_1\,|\,\mathbf{Y}_n)&=0.00,& P(\model_2\,|\,\mathbf{Y}_n)&=0.98,&
    P(\model_3\,|\,\mathbf{Y}_n)&= 0.02.
  \end{align}
  They are based on slightly different priors for the parameters
  $(\boldsymbol{\Sigma},\boldsymbol{\beta})$ of each model.  Both
  types of priors have all parameters independent and use inverse
  Gamma distributions for the diagonal entries of
  $\boldsymbol{\Sigma}$.  The entries of $\boldsymbol{\beta}$ are
  i.i.d.~normal, but in doing so different identifiability constraints
  are used for~(\ref{eq:lopeswest:rjmcmc1})
  versus~(\ref{eq:lopeswest:rjmcmc2}).  The detailed specification of
  the prior is given in Sections 2.3 and 6.3 of \cite{Lopes:2004}.
  
  We consider these same data and compute Schwarz's BIC as well as our
  singular BIC.  We find it natural to also consider the model $\model_0$
  that postulates independence of the 6 considered changes in exchange
  rates.  Based on ongoing work of the first author and collaborators,
  we use the following learning coefficients $\lambda_{ij}$ for
  $\SBIC$:
  \begin{equation*}
    \renewcommand\arraystretch{1.2}
    \begin{array}{rcccc}
      & j=0 & j=1 & j=2 & j=3\\
      \hline
      i=0 & 3\\
      i=1 & \frac{9}{2} & 6           & \\
      i=2 & 6 & \frac{29}{4} & \frac{17}{2}  \\
      i=3 & \frac{15}{2}  & \frac{17}{2}   &   \frac{19}{2}      & \frac{21}{2}
    \end{array}
  \end{equation*}
  with all multiplicities $m_{ij}=1$.  These learning coefficients do
  not include the contribution of $6/2=3$ from the means of the
  six variables.  Note that the `top coefficient' $\lambda_{ii}$
  equals the dimension of the set of covariance matrices in model
  $\model_i$; for a computation of this dimension see, e.g.,~Theorem 2 in
  \cite{drton:2007}.

\begin{figure}[t]
  \begin{center}
    \includegraphics[scale=0.63]{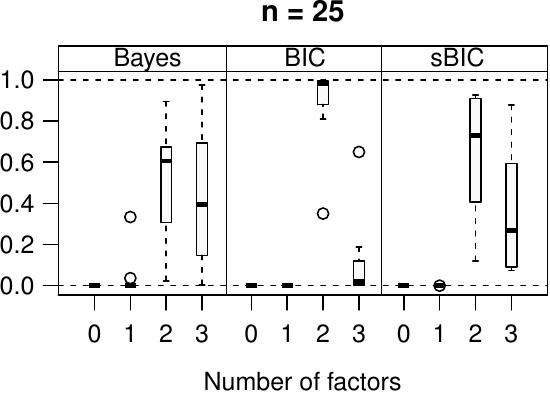} \hspace{.5cm}
    \includegraphics[scale=0.63]{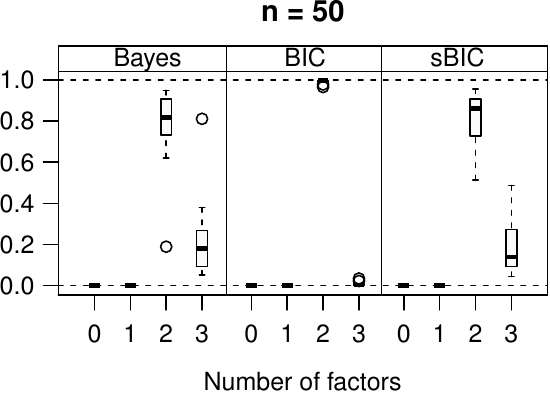}\\[0.25cm]
    \includegraphics[scale=0.63]{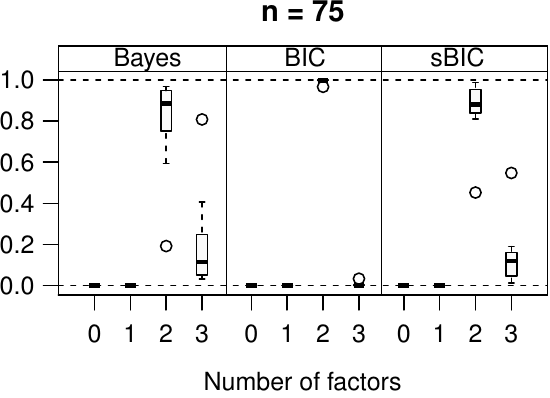} \hspace{.5cm}
    \includegraphics[scale=0.63]{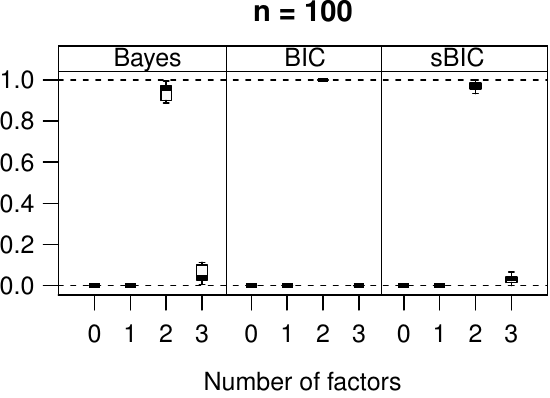}
  \end{center}\centering
  \caption{Boxplots of posterior
    model probabilities in a factor analysis of exchange rate data
    under subsampling to size $n\in\{25,50,75,100\}$: Results from a
    Markov chain Monte Carlo algorithm (`Bayes'), Schwarz's BIC and the
    new $\SBIC$.}
  \label{fig:factanal-boxplots}
\end{figure}

  Exponentiating and renormalizing either
  set of BIC scores, we obtain the following approximate posterior
  model probabilities:
  \begin{equation}
    \label{eq:lopeswest:bic}
  \begin{array}{lcccc}
     & P(\model_0\,|\,\mathbf{Y}_n) & P(\model_1\,|\,\mathbf{Y}_n) & P(\model_2\,|\,\mathbf{Y}_n)
    & P(\model_3\,|\,\mathbf{Y}_n)\\ 
    \hline
    \mathrm{BIC} & 0.0000 & 0.0000 & 0.9999 & 0.0001\\
    \SBIC & 0.0000 & 0.0000 & 0.9797 & 0.0203\\
  \end{array}
  \end{equation}
  Comparing~(\ref{eq:lopeswest:bic}) to (\ref{eq:lopeswest:rjmcmc1})
  and (\ref{eq:lopeswest:rjmcmc2}), we see that the approximation
  given by $\SBIC$ gives results that are closer to the Monte Carlo
  approximations than those from the standard BIC which leads to
  overconfidence in model $\model_2$.  Of course, this assessment is
  necessarily subjective as it pertains to a comparison with two
  particular priors $P(\pi_i\,|\,\model_i)$ in each model.
  
  To further explore the connection between the information criteria
  and fully Bayesian procedures, we subsampled the considered
  exchange rate data to create 10 data sets for each sample size
  $n\in\{25,50,75,100\}$.  For each data set we ran the Markov chain
  Monte Carlo algorithms of \cite{Lopes:2004}, focusing on the prior
  underlying~(\ref{eq:lopeswest:rjmcmc2}).  In
  Figure~\ref{fig:factanal-boxplots} we present boxplots of the four
  posterior model probabilities.  When comparing the spread in the
  approximate posterior probabilities, $\SBIC$ gives a far better
  agreement with the fully Bayesian procedure than the standard BIC.

\begin{figure}[t]
  \begin{center}
    \includegraphics[scale=0.63]{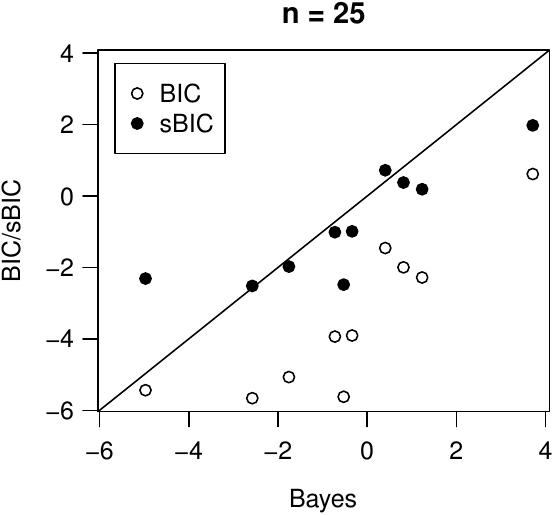} \hspace{.5cm}
    \includegraphics[scale=0.63]{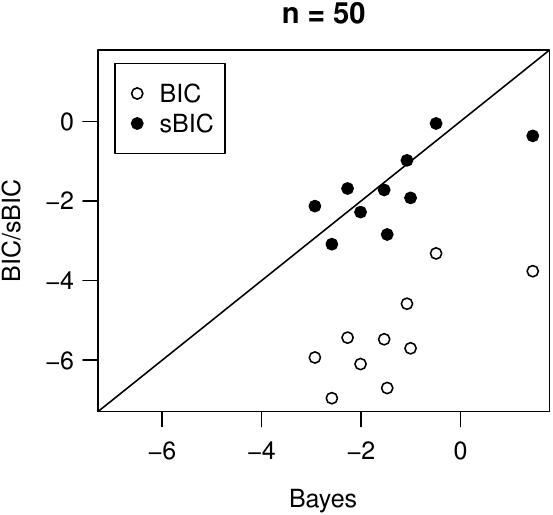}  \\[0.25cm]
    \includegraphics[scale=0.63]{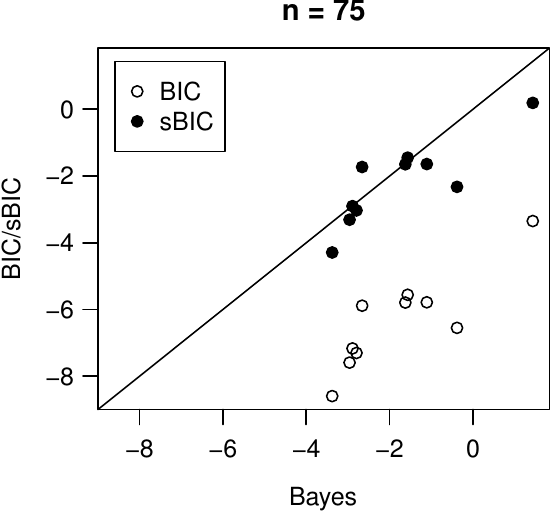} \hspace{.5cm}
    \includegraphics[scale=0.63]{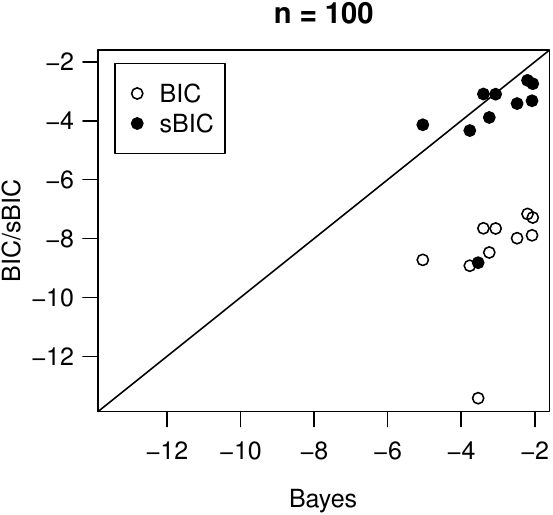}
  \end{center}\centering
  \caption{Scatter plot of log-Bayes factors comparing the results of
    a Markov chain Monte Carlo algorithm to BIC and $\SBIC$ in a
    factor analysis of exchange rate data under subsampling to size
    $n\in\{25,50,75,100\}$.} 
  \label{fig:factanal-bf}
\end{figure}
 
  For the considered data, the model uncertainty mostly concerns the
  decision between two and three factors and can be summarized by the
  Bayes factor for this model comparison.  In
  Figure~\ref{fig:factanal-bf}, we plot the log-Bayes factors obtained
  from the Markov chain Monte Carlo procedure against those computed
  via the information criteria.  The results from $\SBIC$ are seen to
  be rather close to Bayesian; the filled points in the scatter plot
  cluster around the 45 degree line.  The plot also illustrates one
  more time that BIC is overly certain about the number of factors
  being two.

\section{Applications   in mixture modeling}
\label{sec:numerical-mix}

We now apply $\SBIC$ to select the number of mixture components for
finite mixture models, which is a problem where the standard
dimension-based BIC has a tendancy to underselect the number of
components \citep[][Section 4.2]{charnigo:2007}.  Determining the
learning coefficients for mixture models can be a complicated problem
but it is possible to give simple and general bounds, and we
demonstrate that these bounds yield useful versions of $\SBIC$ (recall
Remark~\ref{rem:sbic-bounds}).  We begin with simulations for mixtures
of Binomial distributions.  Next, we fit Gaussian mixture models to
the all too familiar galaxies data \citep[e.g.][]{Roeder:1997} in
order to illustrate that $\SBIC$ allows for more posterior mass to be
assigned to larger models, which seems more in line with fully
Bayesian procedures for model determination.  Finally, we present
simulations for latent class analysis, which involves mixture models
with multi-parameter component distributions.  In this setting, the
values of the learning coefficients depend in important ways on the
choice of prior distributions, which can have substantial impact on
the model selection behavior of $\SBIC$.


\subsection{Binomial mixtures}
\label{sec:binom-mix}

Suppose $Y_{n1},\dots,Y_{nn}$ are i.i.d.~counts whose distribution
$\pi$ is modeled as a mixture of Binomial distributions.  We write
$\mathcal{B}(k,\theta)$ for the Binomial distribution with sample size
parameter $k$ and success probability $\theta\in[0,1]$.  To match
previously used notation, let $i$ denote the number of mixture
components, and let model $\model_i$ comprise the distributions
\[
\pi_i(\boldsymbol{\alpha},\boldsymbol{\theta}) \;=\; \sum_{h=1}^i
\alpha_h \mathcal{B}(k,\theta_h), 
\]
where $\boldsymbol{\alpha}=(\alpha_1,\dots,\alpha_i)$ is a vector of
unknown nonnegative mixture weights that sum to one, and
$\boldsymbol{\theta}=(\theta_1,\dots,\theta_i)\in[0,1]^i$ is a vector
of unknown success probabilities.  We assume the Binomial sample size
parameter $k$ to be known.  Throughout this subsection, we assume that
each prior distribution
$P(\boldsymbol{\alpha},\boldsymbol{\theta}\,|\,\model_i)$ has a
density that is bounded away from zero on
$\Delta_{i-1}\times [0,1]^i$.

Consider now a data-generating distribution $\pi_0\in\model_i$.  The
fiber of $\pi_0$ under the parametrization of $\model_i$ is the
preimage
\begin{equation}
  \label{eq:fiber_binom_mix}
\mathcal{F}_i(\pi_0) \;=\;
\left\{ 
(\boldsymbol{\alpha},\boldsymbol{\theta})\in\Delta_{i-1}\times [0,1]^i \::\:
\pi_i(\boldsymbol{\alpha},\boldsymbol{\theta}) = \pi_0
\right\},
\end{equation}
containing all parameter vectors
$(\boldsymbol{\alpha},\boldsymbol{\theta})$ that define the same
distribution $\pi_0$.  Here, $\Delta_{i-1}$ denotes the $(i-1)$
dimensional probability simplex.  Clearly, if
$(\boldsymbol{\alpha},\boldsymbol{\theta})\in\mathcal{F}_i(\pi_0)$
then $\mathcal{F}_i(\pi_0)$ also contains any vector that is obtained
by permuting the entries of $\boldsymbol{\theta}$ and, accordingly,
those of $\boldsymbol{\alpha}$.  When $i$ is not too large with
respect to $k$, specifically, if $2i-1\le k$, then the fiber of a
distribution $\pi_0=\pi_i(\boldsymbol{\alpha},\boldsymbol{\theta})$ in
$\model_i\setminus\model_{i-1}$ contains only the $i!$ points in
the orbit of $(\boldsymbol{\alpha},\boldsymbol{\theta})$, and
\begin{equation}
  \label{eq:binmix-model-dim}
  \dim(\model_i) \;=\; 2i-1;
\end{equation}
see Proposition 4 in \cite{teicher:1963} or also Section 3.1 in
\cite{Titterington}.

When the number of mixture components $i$ is larger than needed,
however, a severe non-identifiability problem arises.  This is
discussed, for instance, in Section 1.3 of \cite{schnatter:2006}.  We
provide some details that form the basis for bounds on learning
coefficients that we will consider for a definition of $\SBIC$.

\begin{proposition}
  \label{prop:binomial-mixtures}
  Suppose $2i-1\le k$ and consider a Binomial mixture
  $\pi_0\in\model_j\setminus \model_{j-1}$ for $j<i$.  Then the fiber
  $\mathcal{F}_i(\pi_0)$ from (\ref{eq:fiber_binom_mix}) is the
  intersection of $\Delta_{i-1}\times[0,1]^i$ with a finite union of
  $(i-j)$ dimensional affine spaces.  In particular,
  $\mathcal{F}_i(\pi_0)$ has dimension $(i-j)$.
\end{proposition}
\begin{proof}
  Since $\pi_0\in\model_j\setminus\model_{j-1}$, we have
  \begin{equation}
    \label{eq:binommix-modelj}
    \pi_0 \;=\;\sum_{h=1}^j
    \alpha_{0h} \mathcal{B}(k,\theta_{0h}), 
  \end{equation}
  where the success probabilities $\theta_{01},\dots,\theta_{0j}$ are
  pairwise disjoint and the mixture weights
  $\alpha_{01},\dots,\alpha_{0j}$ are positive.  The probabilities
  $\theta_{0h}$ and $\alpha_{0h}$ in (\ref{eq:binommix-modelj}) are
  unique up to permutation.  

  We may represent $\pi_0$ as an element of $\model_i$ by setting
  $i-j$ of the mixture weights to zero, in which case $i-j$ of the
  success probabilities can be chosen arbitrarily.  More precisely, if
  we take
  \[
  \boldsymbol{\alpha}\;=\;(\alpha_{01},\dots,\alpha_{0j},0,\dots,0)
  \;\in\;\Delta_{i-1},  
  \]
  then
  $(\boldsymbol{\alpha},\boldsymbol{\theta})\in\mathcal{F}_i(\pi_0)$
  for any vector $\boldsymbol{\theta}$ with $\theta_h=\theta_{0h}$ for
  $1\le h\le j$.  Hence, the fiber $\mathcal{F}_i(\pi_0)$ contains the
  $(i-j)$ dimensional set
  \begin{equation}
    \label{eq:bin-mix-weights-zero}
    \{(\alpha_{01},\dots,\alpha_{0j},0,\dots,0)\} \times \{
    (\theta_{01},\dots,\theta_{0j})\}\times [0,1]^{i-j}
  \end{equation}
  and its orbit under the action of the symmetric group.  
  
  A second way to represent $\pi_0$ as an element of $\model_i$ is to
  choose a vector $\boldsymbol{\theta}\in[0,1]^i$ that has precisely
  $j$ distinct entries, the distinct values being
  $\theta_{01},\dots,\theta_{0j}$.  For each index
  $h\in\{1,\dots,j\}$, let $J_h$ be the set of indices
  $l\in\{1,\dots,i\}$ such that $\theta_l=\theta_{0h}$.  Then
  $J_1,\dots,J_j$ form a partition of $\{1,\dots,i\}$.  For instance,
  if
  \begin{equation}
    \label{eq:bin-mix-thetas-duplicated}
    \boldsymbol{\theta} \;=\;
    (\theta_{01},\theta_{02},..., \theta_{0(j-1)},\theta_{0j},...,\theta_{0j})
    \;\in\; [0,1]^i,
  \end{equation}
  then $J_h=\{h\}$ for all $h<j$ and $J_j=\{j,\dots,i\}$.  
  In order for $(\boldsymbol{\alpha},\boldsymbol{\theta})$ to be in the
  fiber $\mathcal{F}_i(\pi_0)$, it needs to hold that 
  \[
  \sum_{l\in J_h} \alpha_l \;=\;\alpha_{0h},  \qquad h=1,\dots,j.
  \]
  Clearly, there are now $i-j$ degrees of freedom in the choice of the
  mixture weights.  For instance, the fiber $\mathcal{F}_i(\pi_0)$
  contains the $(i-j)$ dimensional set that 
  \begin{equation}
    \label{eq:bin-mix-thetas-dup-set}
    \left\{(\alpha_{01},\dots,\alpha_{0,j-1})\right\}\times
    \left(\alpha_{0j}\Delta_{i-j}\right) \times \{ 
    (\theta_{01},\dots,\theta_{0j},\theta_{0j},\dots,\theta_{0j})\}
  \end{equation}
  and its orbit under the action of the symmetric group. 
\end{proof}

When $i=2$ and $\theta_0=\mathcal{B}(k,2/3)$ with $k\ge 3$, then the
fiber $\mathcal{F}_i(\pi_0)$ is a union of three line segments.  This
fiber is plotted in Figure~\ref{fig:binom-mix-fiber}; the two gray
lines intersect the boundary of the probability simplex $\Delta_1$,
i.e., have $\alpha=\alpha_1=0$ or $1-\alpha=\alpha_2=0$.

\begin{figure}[t]
  \centering
  \includegraphics[scale=0.7]{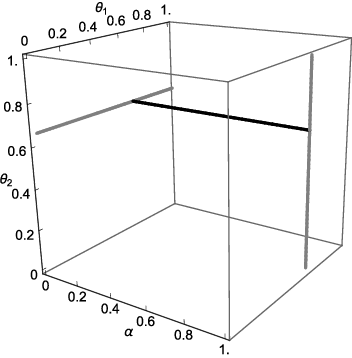}
  \caption{The fiber of a Binomial distribution in the model that
    mixes two Binomial distributions.}
  \label{fig:binom-mix-fiber}
\end{figure}

By Proposition~\ref{prop:binomial-mixtures}, the fiber
$\mathcal{F}_i(\pi_0)$ of a generic distribution $\pi_0\in\model_j$,
$j\le i$, is a set of dimension $i-j$.  The learning coefficient
$\lambda_i(\pi_0)$ for model $\model_i$ depends only on $j$ and can be
bounded by subtracting the dimension of the fiber from the model
dimension; see Section 7.3 in \cite{Watanabe:Book}.  Writing
$\lambda_{ij}=\lambda_i(\pi_0)$, we find that
\begin{equation}
  \label{eq:lambdaij-bin-mix-bound}
\lambda_{ij} \;\le\; \bar\lambda_{ij}^1 := \frac{1}{2}\left[
  \dim(\model_i) - (i-j)\right] \;=\; \frac{1}{2}\left[
  2j-1 + (i-j)\right] 
\;=\; \frac{1}{2}\left(i+j-1\right).
\end{equation}
Now it is known that the actual learning coefficient for Binomial
mixture models ($i\ge 2$) is smaller than the $\bar\lambda_{ij}^1$
from~(\ref{eq:lambdaij-bin-mix-bound}).  Indeed, for a prior density
that is bounded away from zero on $\Delta_{i-1}\times [0,1]^i$,
\cite{yamazaki-watanabe:2004} have shown that
\[
\lambda_{i,i-1}=i-\frac{5}{4},
\]
whereas $\bar\lambda_{i,i-1}^1=i-1$.  The model dimension
from~(\ref{eq:binmix-model-dim}) yields the looser bound $i-1/2$.
While no general formulas for the coefficients $\lambda_{ij}$ have
been obtained thus far, the analysis of \citet[eqns.~(5) and
(6)]{rousseach:2011} yields the tighter bound
\begin{equation}
  \label{eq:lambdaij-bin-mix-rousseau-bound}
  \lambda_{ij} \;\le\; \bar\lambda_{ij}^{0.5} := \frac{1}{2}\left[
    2j-1+ \frac{1}{2}(i-j)\right] 
  \;=\; \frac{i+3j}{4} - \frac{1}{2}.
\end{equation}
For $j=i-1$, we have $\bar\lambda_{ij}^{0.5}=\lambda_{ij}=i-5/4$
but we do not expect this to be true in general.

In light of the above discussion, we argue that using the bounds
$\bar\lambda_{ij}^{0.5}$ from~(\ref{eq:lambdaij-bin-mix-rousseau-bound})
or even the very easily derived bound $\bar\lambda_{ij}^1$
from~(\ref{eq:lambdaij-bin-mix-bound}) is more appropriate for the
definition of a Bayesian information criterion than merely working
with the model dimension from~(\ref{eq:binmix-model-dim}).  For a
numerical experiment, we generate data from a distribution $\pi_0$
that is a mixture of $4$ (but not less) Binomial distributions that
each have sample size parameter $k=30$.  Specifically, we consider the
mixture weights
\begin{align*}
  \alpha_{01} &= 1/4,
  &   \alpha_{02} &= 1/4, 
  &   \alpha_{03} &= 1/4, 
  &   \alpha_{04} &= 1/4
\end{align*}
and the success probabilities
\begin{align*}
  \theta_{01} &=1/5, 
  &   \theta_{02} &=2/5, 
  &   \theta_{03} &=3/5, 
  &   \theta_{04} &=4/5.
\end{align*}
For varying values $n$, we generate an i.i.d.~sample of size $n$ from
$\pi_0$ and select the number of mixture components by maximizing (i)
Schwarz's BIC which uses the model dimension $2i-1$, (ii)
$\overline{\SBIC}_{0.5}$ by which we mean the singular BIC computed
using the bounds $\bar\lambda_{ij}^{0.5}$
from~(\ref{eq:lambdaij-bin-mix-rousseau-bound}), and (iii)
$\overline{\SBIC}_1$ which stands for the singular BIC computed using
the $\bar\lambda_{ij}^1$ from~(\ref{eq:lambdaij-bin-mix-bound}).  Both
$\overline{\SBIC}_{0.5}$ and $\overline{\SBIC}_1$ have all
multiplicities $m_{ij}$  set to their lower bound $1$.  We repeat
the model selection 200 times.

The frequencies of how often a particular number of components was
selected by each method are depicted in
Figure~\ref{fig:binmix-simulation}, where we show plots for $n=50$,
$200$ and $500$.  The results are similar to those in the rank
selection experiment from Section~\ref{subsec:redrank} in that our
singular BIC allows one to identify the true number of components
earlier than BIC.  Both $\overline{\SBIC}_{0.5}$ and $\overline{\SBIC}_1$ alleviate some of the
overpenalization that arises when using solely the model dimension,
with $\overline{\SBIC}_{0.5}$ performing the best.

\begin{figure}[t]
  \begin{center}
    \includegraphics[width=4.55cm]{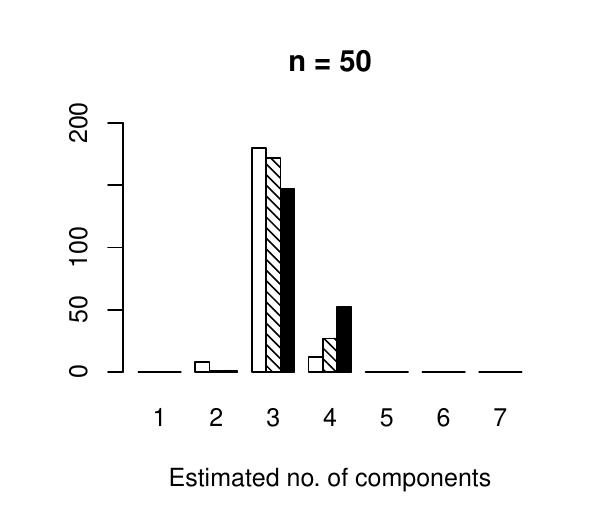}
    \includegraphics[width=4.55cm]{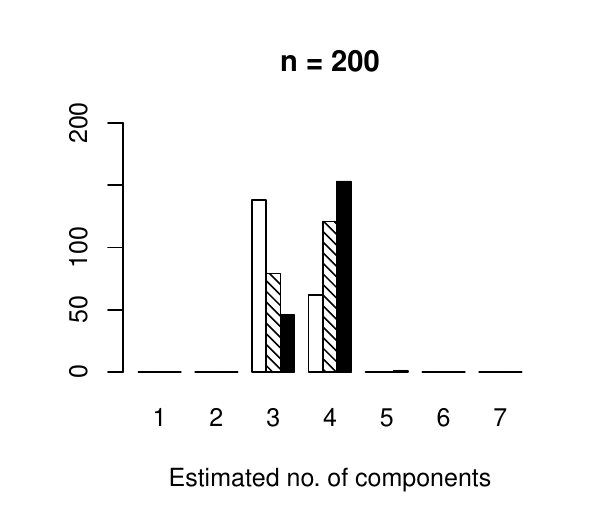}
    \includegraphics[width=4.55cm]{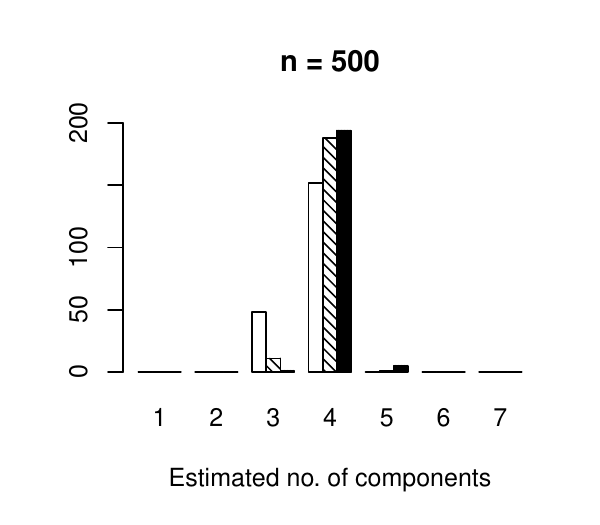}
  \end{center}\centering
  \caption{Frequencies of estimated number of Binomial mixture
    components using Schwarz's BIC (white), $\overline{\SBIC}_{0.5}$ (black), and
    $\overline{\SBIC}_1$ (striped). Results from 200 simulations with
    sample size 
    parameter $k=30$ and true number of components equal to 4.}
  \label{fig:binmix-simulation}
\end{figure}


\subsection{Gaussian mixtures}
\label{sec:gauss-mix}

\cite{aoyagi:2010} has found the learning coefficients of univariate
Gaussian mixture models when the variances of the component
distributions are known and equal to a common value.  Using them in
$\SBIC$ yields a criterion whose model selection properties are
similar to what we have shown for reduced-rank regression and Binomial
mixtures.  In this section, we report instead on a data analysis with
Gaussian mixtures where the variances are unknown and allowed to be
unequal.

Let $\model_i$ be the (univariate) Gaussian mixture model with $i$
mixture components, which comprises the distributions
\[
\pi_i(\boldsymbol{\alpha},\boldsymbol{\mu},\boldsymbol{\sigma^2}) \;=\; \sum_{h=1}^i \alpha_h \,\mathcal{N}(\mu_h,\sigma_h^2)
\]
for a vector of mixture weights
$\boldsymbol{\alpha}=(\alpha_1,\dots,\alpha_i)\in\Delta_{i-1}$,
choices of means $\boldsymbol{\mu}=(\mu_1,\dots,\mu_i)\in\mathbb{R}^i$
and variances
$\boldsymbol{\sigma^2}=(\sigma_1^2,\dots,\sigma_i^2)\in(\epsilon,\infty)^i$.
Here, we made explicit that the software we will use later, namely,
the R package {\tt mclust} \citep{fraley2012mclust}, uses a lower
bound $\epsilon>0$ to avoid the well-known singularities in the
likelihood surfaces obtained by letting one or more variances tend to
zero.  Such a lower bound also appears in
consistency theory for BIC \citep[Prop.~4.2]{keribin:2000}.

In the Gaussian mixture model $\model_i$, the fiber of a distribution
$\pi_0$ is the set
\begin{equation}
  \label{eq:fiber_gauss_mix}
  \mathcal{F}_i(\pi_0) \;=\;
  \left\{ 
    (\boldsymbol{\alpha},\boldsymbol{\mu},\boldsymbol{\sigma^2})\in\Delta_{i-1}\times
    \mathbb{R}^i\times  (\epsilon,\infty)^i\::\:
    \pi_i(\boldsymbol{\alpha},\boldsymbol{\mu},\boldsymbol{\sigma^2}) = \pi_0
  \right\}.
\end{equation}
By Proposition~1 in \cite{teicher:1963}, if
$\pi_0\in\model_i\setminus\model_{i-1}$ then $\mathcal{F}_i(\pi_0)$ is
finite with $|\mathcal{F}_i(\pi_0)|=i!$, and we have
\begin{equation}
\dim(\model_i)=3i-1.
\end{equation}
However, as discussed in Section~\ref{sec:binom-mix}, a distribution
$\pi_0\in\model_j\subset\model_i$, $j<i$, will have an infinite fiber
$\mathcal{F}_i(\pi_0)$ due to the obvious non-identifiability problem
arising from specifying the number of mixture components $i$ larger
than needed.

As described in the proof of Proposition~\ref{prop:binomial-mixtures},
a distribution $\pi_0\in\model_j\setminus \model_{j-1}$ with $j<i$ can
be represented as a member of $\model_i$ by setting $i-j$ of the
mixture weights to zero, which here leaves $i-j$ of the mean
parameters and $i-j$ of the variance parameters free.  Hence, the
fiber contains a set of dimension $2(i-j)$ that is made up of triples
$(\boldsymbol{\alpha},\boldsymbol{\mu},\boldsymbol{\sigma^2})$ that
have $\boldsymbol{\alpha}$ on the boundary of the probability simplex
$\Delta_{i-1}$.  By this fact, the learning coefficient
$\lambda_{ij}=\lambda_i(\pi_0)$ can be bounded as
\begin{equation}
  \label{eq:lambdaij-gauss-mix-bound-star}
\lambda_{ij} \;\le\; \bar\lambda_{ij}^1 := \frac{1}{2}\left[
  \dim(\model_i) - 2(i-j)\right]= \frac{1}{2}\left[
  3j-1 + (i-j)\right] = \frac{1}{2}\left(i+2j-1 \right);
\end{equation}
see again Section 7.3 in \cite{Watanabe:Book}.  For the bound to
apply, however, the density of the prior distribution
$P(\boldsymbol{\alpha},\boldsymbol{\mu},\boldsymbol{\sigma^2}\,|\,\model_i)$
has to be bounded away from zero in a neighborhood of an $2(i-j)$
dimensional subset of $\mathcal{F}_i(\pi_0)$.  This is the case if the
prior density for $\boldsymbol{\alpha}$ is bounded away from zero on
and near the boundary of the probability simplex $\Delta_{i-1}$; a
uniform distribution on $\Delta_{i-1}$ would be an example.

Keeping with $\pi_0\in\model_j\subset\model_i$, let $\Delta_{i-1}^o$
denote the interior of the probability simplex, and consider instead
the fiber
\begin{equation}
  \label{eq:fiber_gauss_mix_interior}
  \mathcal{F}_i^o(\pi_0) \;=\;
  \left\{ 
    (\boldsymbol{\alpha},\boldsymbol{\mu},\boldsymbol{\sigma^2})\in\Delta_{i-1}^o\times
    \mathbb{R}^i\times  (\epsilon,\infty)^i\::\:
    \pi_i(\boldsymbol{\alpha},\boldsymbol{\mu},\boldsymbol{\sigma^2}) = \pi_0
  \right\}
\end{equation}
that has all mixture weights nonzero.  This `positive fiber'
$\mathcal{F}_i^o(\pi_0)$ is of lower dimension than
$\mathcal{F}_i(\pi_0)$.  Indeed, equating means and variances between
mixture components in analogy to~(\ref{eq:bin-mix-thetas-duplicated})
and~(\ref{eq:bin-mix-thetas-dup-set}) shows that
$\mathcal{F}_i^o(\pi_0)$ has dimension $i-j$.  Hence, for a prior that
is supported on a subset of $\Delta_{i-1}^o$, subtraction of the fiber
dimension leads to the bound
\begin{equation}
  \label{eq:lambdaij-gauss-mix-bound}
\lambda_{ij} \;\le\; \bar\lambda_{ij}^2 := \frac{1}{2}\left[
  \dim(\model_i) - (i-j)\right] = \frac{1}{2}\left[
  3j-1 + 2(i-j)\right] = \frac{1}{2}\left(2i+ j-1\right).
\end{equation}
Nevertheless, the more refined analysis from \citet[eqns.~(5) and
(6)]{rousseach:2011} shows that the bound $\bar\lambda_{ij}^1$
from~(\ref{eq:lambdaij-gauss-mix-bound-star}) remains valid when the
prior density is bounded away from zero in a neighborhood of a point in
$\mathcal{F}_i^o(\pi_0)$.

To illustrate the above in an example, take $\pi_0=\mathcal{N}(0,1)$,
a standard normal distribution.  Then the fiber $\mathcal{F}_2(\pi_0)$
in the two component mixture model is the union of two planes and a
line intersected with
$\Delta_1\times \mathbb{R}^2\times (\epsilon,\infty)^2$.  The
structure of the fiber is as in Figure~\ref{fig:binom-mix-fiber},
except that the two gray line segment now are two-dimensional
rectangular strata.  The black part with mixture weights
$\alpha_1=\alpha$ and $\alpha_2=1-\alpha$ remains a line segment.  The
set $\mathcal{F}_i^o(\pi_0)$ then comprises only this line segment but
not the two-dimensional strata.  

\begin{figure}[t]
  \centering
    \includegraphics[scale=0.6]{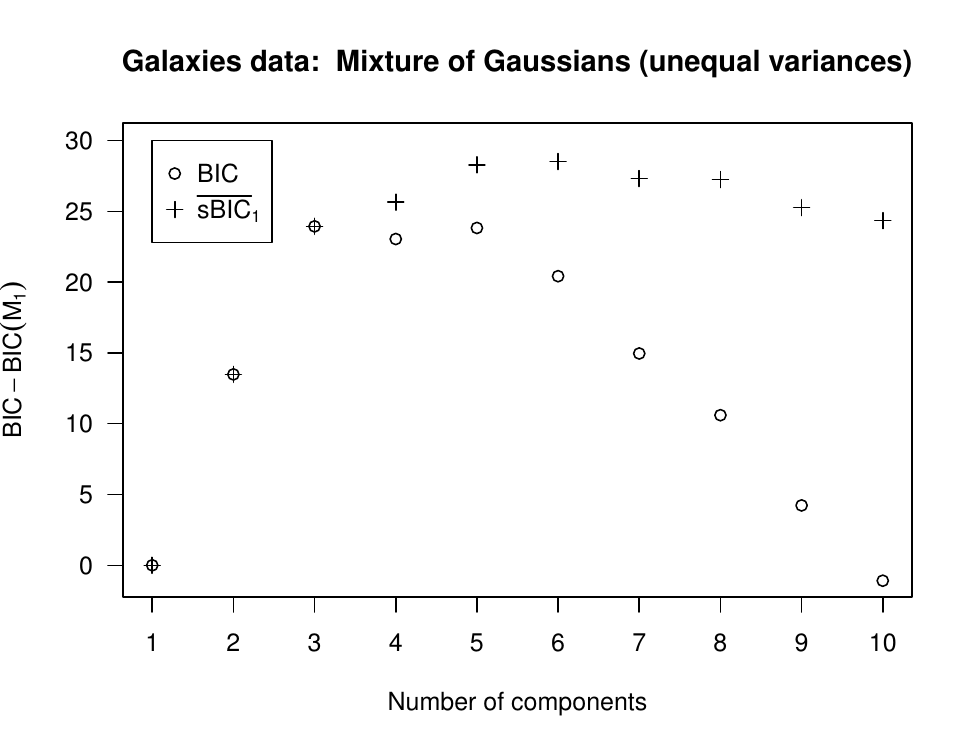}
    \vspace{-0.5cm}
  \caption{Galaxies data:  BIC and $\overline{\SBIC}_1$.} 
  \label{fig:galaxies:bic}
\end{figure}

Using the bounds $\bar\lambda_{ij}^1$
from~(\ref{eq:lambdaij-gauss-mix-bound-star}) and setting all
multiplicities to 1 yields a version of $\SBIC$, which we denote by
$\overline{\SBIC}_1$.  (We will briefly comment on the bounds
$\bar\lambda_{ij}^2$ in our conclusion.)  We apply
$\overline{\SBIC}_1$ to a familiar example, namely, the galaxies data
set discussed in depth in the review of \cite{aitkin:2001} and also in
Example 4 in \cite{Marin:2005}.  We use the EM algorithm implemented
in the R package {\tt mclust} \citep{fraley2012mclust} to fit the
mixture models and base our results on the best local maxima of the
likelihood function that were found in repeated EM runs.  For each
model, we ran the EM from 5000 random initializations that were
created by drawing, independently for each data point, a vector of
cluster membership probabilities from the uniform distribution on the
relevant probability simplex.  Figure~\ref{fig:galaxies:bic} depicts
the resulting values of BIC and $\overline{\SBIC}_1$.  These are
converted into posterior model probabilities in
Figure~\ref{fig:galaxies:post}, where we also show posterior
probabilities from the fully Bayesian analysis of
\cite{richardson:green:1997} who, in particular, adopted a uniform
prior for the mixture weights.

\begin{figure}[t]
  \centering
  \includegraphics[scale=0.6]{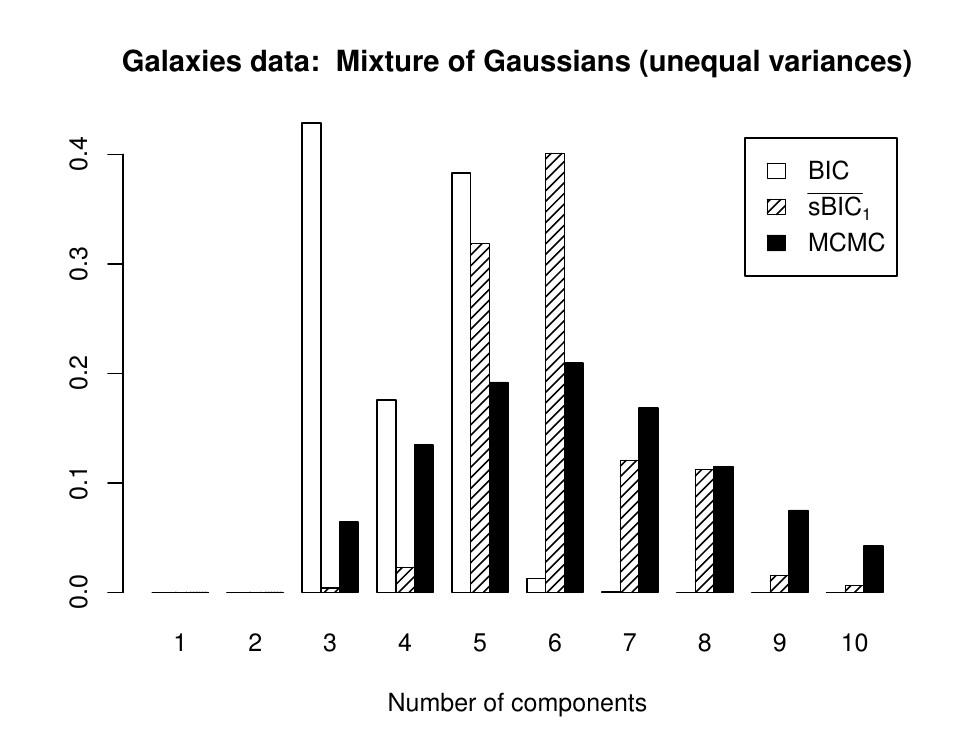}
  \vspace{-0.25cm}
  \caption{Galaxies data: Posterior model probabilities from BIC,
     $\overline{\SBIC}_1$ and MCMC per \cite{richardson:green:1997}.}
  \label{fig:galaxies:post}
\end{figure}

Figure~\ref{fig:galaxies:bic} shows that the information criteria
assign essentially the same value to the models $\model_i$ with
$i\le 3$.  This is due to poor model fit, i.e., very small maximal
likelihood under $\model_1$ and $\model_2$.  Starting with four
components differences emerge.  BIC attains high values for
$i\in\{3,4,5\}$ and decreases very quickly for larger $i$.
The decrease is nearly as quick as the increase through the models
with $i\le 3$ components, where those with $i\le 2$ seem too simple.
In contrast, $\overline{\SBIC}_1$ is largest for $i=6$ followed
closely by $i=5$, and its values remain rather large for
$i\in\{7,8\}$.  The decay for larger $i$ is far slower than for BIC.
In Figure~\ref{fig:galaxies:post}, approximate posterior model
probabilities from $\overline{\SBIC}_1$ are closer to 
the Monte Carlo estimates reported by \cite{richardson:green:1997}; see also
\cite{LeeRobert2013}.

\subsection{Latent class analysis}
\label{subsec:lca}

Our last experiments pertain to latent class analysis (LCA), in which
the joint distribution of a collection of categorical variables, the
\emph{items}, is modeled to exhibit conditional independence given a
categorical latent variable.  The values of the latent variable are
the \emph{classes}.  LCA models are also known as naive Bayes models
\citep{geiger:2001} and are related to secant varieties of Segre
varieties studied in algebraic geometry \cite[Chap.~4.1]{Drton:Book}.
We consider them here because LCA models are mixture models in which
the component distributions are taken from a family of larger
dimension.  As we will see, this makes the choice of the priors on the
mixture weights more important.

We will treat the case of $r$ binary items whose values we code to
be in $\{0,1\}$.  The model $\model_i$ with $i$ classes then
postulates that the joint probabilities for the binary items
$Y_1,\dots,Y_r$ are of the form
\[
\Pr(Y_1=y_1,\dots,Y_r=y_r) \;=\; \sum_{h=1}^i \alpha_h \prod_{l=1}^r
p_{hl}^{y_l}(1-p_{hl})^{1-y_l},
\]
where $\alpha_h$ is the probability of being in class $h$, and
$p_{hl}$ is the conditional probability of $Y_l=1$ given membership in
class $h$.  We emphasize that $r$ is equal to the dimension of the
family of distributions from which the mixture components are taken.
Counting parameters one expects that the dimension of the LCA model
$\model_i$ is
\begin{equation}
  \label{eq:lca:exp-dim}
  \min\left\{ ir+(i-1), 2^r-1\right\}.
\end{equation}
There are exceptional cases where this is not the correct dimension,
see e.g.~Example 4.1.8 in \cite{Drton:Book}.  However, Theorem 2.3 in
\cite{catalisano:2005} guarantees that all models in our below
simulation study have dimension given by~(\ref{eq:lca:exp-dim}).
All these models are also generically identifiable up to label
swapping by Corollary 5 in \cite{Allman:2009}.

Let $\pi_0\in\model_j\setminus\model_{j-1}$ for $j<i$, and assume that
$\dim(\model_i)=(i-1)+ir\le 2^r-1$.  Reasoning as in
Section~\ref{sec:gauss-mix}, dimension counting yields two simple
bounds on the learning coefficients.  For $\phi>0$,
define\footnote{Note that $r$ is the dimension of the model for a
  single mixture component.  The notation
  from~(\ref{eq:bar-lambda-phi}) matches the earlier use in
  Section~\ref{sec:binom-mix}, where $r=1$, and that in
  Section~\ref{sec:gauss-mix}, where $r=2$.}
\begin{equation}
  \label{eq:bar-lambda-phi}
  \bar\lambda_{ij}^\phi \;:=\; \frac{1}{2}
  \left[jr+j-1 + (i-j)\phi \right].  
\end{equation}
When allowing zero mixture weights $\alpha_h$, the fiber of $\pi_0$
has dimension $r(i-j)$ and the analogue
of~(\ref{eq:lambdaij-gauss-mix-bound-star}) becomes
\begin{equation}
  \label{eq:lca*-bound}
  \lambda_{ij} \;\le\;  \frac{1}{2}
  \left[\dim(\model_i) - r(i-j) \right] = \frac{1}{2}
  \left[rj+(i-1) \right] = \bar\lambda_{ij}^1 .
\end{equation}
This bound is of relevance when the prior distribution of the mixture
weights $\alpha_h$ is bounded away from zero in a neighborhood of the
boundary of the probability simplex $\Delta_{i-1}$.  Similarly, if the
fiber is restricted to only include points with all $\alpha_h>0$, then
the dimension of this `positive fiber' is only $(i-j)$ and the
analogue of~(\ref{eq:lambdaij-gauss-mix-bound}) is
\begin{equation}
  \label{eq:lca-bound}
  \lambda_{ij} \;\le\; \frac{1}{2}
  \left[\dim(\model_i) - (i-j) \right] = \frac{1}{2}
  \left(ri+j-1 \right) = \bar\lambda_{ij}^r.
\end{equation}
This bound is of interest when the prior distribution of the mixture
weights is zero along the boundary of $\Delta_{i-1}$ but bounded away
from zero in a neighborhood of a point in the positive fiber.
However, as in Sections~\ref{sec:binom-mix} and~\ref{sec:gauss-mix},
we may conclude from the work of \cite{rousseach:2011} that for such
priors it holds that
\begin{equation}
  \label{eq:lca-bound-rouss}
  \lambda_{ij} \;\le\; \frac{1}{2}
  \left[rj+j-1+(i-j)\frac{r}{2} \right] = \bar\lambda_{ij}^{r/2}.
\end{equation}

Contrasting the difference in dimension of the fibers
$\mathcal{F}_i(\pi_0)$ and $\mathcal{F}_i^o(\pi_0)$ when
$\pi_0\in\model_j$ with $j<i$, it is clear that the choice of priors
for the mixture weights $\boldsymbol{\alpha}$ may considerably impact
posterior model probabilities.  In particular, if the prior assigns
non-negligible mass near the boundary of the probability simplex, then
the likelihood function for a sample from $\pi_0\in\model_j$ will be
large near the high-dimensional strata of $\mathcal{F}_i(\pi_0)$.
Model $\model_i$ then behaves like a low-dimensional model, and the
Occam's razor effect from integrating the likelihood function in a
Bayesian approach to model determination is weak.  For our $\SBIC$,
this expresses itself via smaller values of (bounds on) learning
coefficients, which leads to less penalization of the likelihood.
In LCA and similar examples of mixtures of
multi-parameter distributions, it is thus useful to be more explicit
about the effects of priors.

Suppose that the prior distribution
$P(\boldsymbol{\alpha}\,|\,\model_i)$ is a Dirichlet distribution with
all hyperparameters equal to $\phi>0$, and that the remaining
parameters $p_{hl}$ are independent of $\boldsymbol{\alpha}$ a priori
and have a positive joint density on $[0,1]^{ir}$.  Then the learning
coefficients $\lambda_{ij}=\lambda_{i}(\pi_0)$ depend on $\phi$, and
the result in \citet[eqns.~(5) and (6)]{rousseach:2011} shows that the
bounds considered above may be refined to
\begin{align}
  \lambda_{ij} \;\le\; \min\{\bar\lambda_{ij}^\phi,\bar\lambda_{ij}^{r/2}\}.
  \label{eq:lca-phi-bound}
\end{align}
In light of this bound, we let $\overline{\SBIC}_\phi$ denote the
version of our information criterion obtained when using the
$\bar\lambda_{ij}^\phi$ from~(\ref{eq:bar-lambda-phi}) as values of
the learning coefficients and setting all multiplicities to 1.
The behavior of
$\overline{\SBIC}_\phi$ may depend heavily on the choice of $\phi$,
with larger values of $\phi$ leading to stronger penalties and
selection of a smaller number of mixture components.  

When the goal is to stay close to Bayesian inference using
Dirichlet priors for $\boldsymbol{\alpha}$, it may be clear which
value of $\phi$ to use.  It is less clear, however, what a default
choice for $\phi$ should be when $\overline{\SBIC}_\phi$ is intended
to be used as an information criterion with good frequentist model
selection properties.  
Some guidance is provided by Theorem 1 in \cite{rousseach:2011}, which
shows that small enough Dirichlet hyperparameters allow for detection
of zero components in an overfitting mixture model.  According to
their result, working with a single overfitting mixture model can be
an alternative to the model selection setup treated in this paper.
When aiming to determine the number of mixture components in a model
selection approach, however, larger Dirichlet hyperparameters have
appeal in that they avoid large marginal likelihood for models for
whom one or more mixture components will remain empty when using the
model for clustering.  This point is also made in Section 4.2 of the
book by \cite{schnatter:2006}.  More specifically, if we wish to avoid
that overfitting mixture models act like models with fewer components,
then Theorem 1 of \cite{rousseach:2011} suggests that $\phi$ should be
chosen no less than $r/2$.  Given the bound
from~(\ref{eq:lca-phi-bound}), we will thus explore the properties of
$\overline{\SBIC}_\phi$ with $\phi$ close to $r/2$ and compare it to
the standard BIC, which is also equal to $\overline{\SBIC}_{r+1}$.
This said, learning coefficients as large as $\bar\lambda_{ij}^\phi$
with $\phi>r/2$ cannot be realized when the prior
density for the probabilities $p_{hl}$ is everywhere positive but they
could arise from priors whose densities are zero at the singularities
with mixture weights $\alpha_h>0$; see \cite{Petralia:2012} for work
related to this issue.

Our simulations apply BIC and $\overline{\SBIC}_\phi$ for recovery of
the number of classes $i$ in LCA.  We adopt the following four
settings from \cite{nylund:2007} that each have binary items:
\begin{enumerate}[label=(\roman*)]
\item \label{8-simple-equal} $r=8$ items, $i_0=4$ true classes and
  equal class sizes ($\alpha_1=\alpha_2=\alpha_3=\alpha_4$);
\item \label{8-simple-unequal} $r=8$ items, $i_0=4$ true classes and unequal class sizes;
\item \label{10-complex-unequal} $r=10$ items, $i_0=4$ true classes  and unequal class sizes;
\item \label{15-simple-equal} $r=15$ items, $i_0=3$ true classes and equal class sizes
  ($\alpha_1=\alpha_2=\alpha_3$).
\end{enumerate}
Settings \ref{8-simple-unequal} and \ref{10-complex-unequal} have
their unequal class sizes given by $\alpha_1=1/21$, $\alpha_2=2/21$,
$\alpha_3=3/21$, and $\alpha_4=15/21$.  We refer the reader to Table 2
in \cite{nylund:2007} for the precise description of the distributions
we simulate from.\footnote{Our weights $\alpha_i$ are proportional to
  the values $0.05$, $0.1$, $0.15$, and $0.75$, with sum 1.05, that
  are stated in Table 2 of \cite{nylund:2007}.}
Settings~\ref{8-simple-equal} and~\ref{8-simple-unequal} differ only
in the values of the mixture weights/class probabilities $\alpha_h$.
They are both `simple' in the sense that for each item $l$ only one of
the class-conditional probabilities $p_{hl}$ is large.
Setting~\ref{15-simple-equal} is simple in the same sense.  In
setting~\ref{10-complex-unequal}, each item has two large and equal
class-conditional probabilities $p_{hl}$ and the other two
probabilities $p_{hl}$ are small and equal.  For each setting, we draw
100 samples of various sizes $n$ and select the number of classes
$i\in\{1,\dots,6\}$ by maximizing the information criteria.
Specifically, we optimized the standard dimension-based BIC as well as
$\overline{\SBIC}_\phi$ with $\phi=0.5,1,1.5,\dots,r$.  Maximum
likelihood estimates were computed using the R package {\tt poLCA}
\citep{poLCA}.

\begin{table}[t!]
  \centering\scriptsize
  \caption{Latent class analysis:  Frequencies of selection of the
    number of classes by BIC and $\overline{\SBIC}_\phi$ for different values of
    $\phi$.  Four true classes.}
  \label{tab:lca}
  \begin{tabular}{@{\hspace{0.05cm}}l@{\hspace{0.05cm}}r*{3}{r}>{\bfseries}rrrc*{3}{r}>{\bfseries}rrrc*{3}{r}>{\bfseries}rrr}
    \hline\hline
    \multicolumn{2}{c}{} 
    \\[-0.3cm]
    \multicolumn{2}{c}{} & \multicolumn{6}{c}{\emph{Classes}}
    && \multicolumn{6}{c}{\emph{Classes}}&& \multicolumn{6}{c}{\emph{Classes}} \\
    \cline{3-8}  \cline{10-15} \cline{17-22}
    Model & \multicolumn{1}{r}{$n$} &1&2&3&4&5&6&& 1&2&3&4&5&6&& 1&2&3&4&5&6\\
    \hline
    \\[-0.2cm]
    8-item&& \multicolumn{6}{c}{BIC} 
    &\, & \multicolumn{6}{c}{$\overline{\SBIC}_5$}&\, & \multicolumn{6}{c}{$\overline{\SBIC}_{4.5}$} \\
    \cline{3-8}  \cline{10-15} \cline{17-22} 
    \\[-0.3cm]
     
    (equal) 
    & 50 &32&39&26&3  &0&0 &&2&9&21&68 &0&0  &&2&4&14&79&1&0 \\
    & 100&2 &13&32&53 &0&0 &&0&0&3 &97 &0&0  &&0&0&0&100&0&0  \\
    & 150&0 &1 &6 &93 &0&0 &&0&0&0 &99 &1&0  &&0&0&0&98&2&0   \\
    & 200&0 &0 &2 &98 &0&0 &&0&0&0 &99 &1&0  &&0&0&0&97&3&0   \\
    & 500&0 &0 &0 &100&0&0 &&0&0&0 &100&0&0  &&0&0&0&100&0&0  \\
    \cline{3-8}  \cline{10-15} \cline{17-22} 
    \\[-0.1cm]
    \multicolumn{2}{c}{} 
    & \multicolumn{6}{c}{$\overline{\SBIC}_{4}$} 
    &\, & \multicolumn{6}{c}{$\overline{\SBIC}_{3.5}$}&\, & \multicolumn{6}{c}{$\overline{\SBIC}_{3}$} \\
    \cline{3-8}  \cline{10-15} \cline{17-22} 
    \\[-0.3cm]
    &50 & 0&2&9&85&4&0 && 0&0&4&78&18&0 && 0& 0& 1& 61& 29& 9\\
    &100& 0&0&0&96&4&0 && 0&0&0&80&20&0 && 0& 0& 0& 62& 33& 5\\
    &150& 0&0&0&94&5&1 && 0&0&0&87&12&1 && 0& 0& 0& 65& 30& 5\\
    &200& 0&0&0&96&3&1 && 0&0&0&86&10&4 && 0& 0& 0& 78& 17& 5\\
    &500& 0&0&0&97&3&0 && 0&0&0&91& 9&0 && 0& 0& 0& 83& 14& 3\\[0.1cm]
    \hline
   \\[-0.2cm]
    8-item&& \multicolumn{6}{c}{BIC} 
    &\, & \multicolumn{6}{c}{$\overline{\SBIC}_5$}&\, & \multicolumn{6}{c}{$\overline{\SBIC}_{4.5}$} \\
    \cline{3-8}  \cline{10-15} \cline{17-22} 
    \\[-0.3cm]
    
    (unequal) 
    & 100 &9&80&11 &0 &0&0 &&0&21&66&13 &0&0 && 0&12&67&20 &1&0\\ 
    & 200 &0&46&50 &4 &0&0 &&0&3 &61&36 &0&0 && 0&1 &51&47 &1&0\\
    & 300 &0&23&70 &7 &0&0 &&0&0 &31&68 &1&0 && 0&0 &25&73 &2&0\\ 
    & 500 &0&0 &53 &47&0&0 &&0&0 &5 &95 &0&0 && 0&0 &3 &97 &0&0\\
    & 1000&0&0 &2  &98&0&0 &&0&0 &0 &100&0&0 && 0&0 &0 &100&0&0\\
    \cline{3-8}  \cline{10-15} \cline{17-22} 
    \\[-0.1cm]
    \multicolumn{2}{c}{} 
    & \multicolumn{6}{c}{$\overline{\SBIC}_{4}$} 
    &\, & \multicolumn{6}{c}{$\overline{\SBIC}_{3.5}$}&\, & \multicolumn{6}{c}{$\overline{\SBIC}_{3}$} \\
    \cline{3-8}  \cline{10-15} \cline{17-22} 
    & 100 &0&7&58&32 &2&1 &&0&3&45&43&8 &1  &&0&1&23&53&16&7\\
    & 200 &0&1&43&53 &3&0 &&0&0&35&57&8 &0  &&0&0&19&59&22&0\\
    & 300 &0&0&21&76 &3&0 &&0&0&13&76&11&0  &&0&0&5 &72&20&3\\
    & 500 &0&0&1 &98 &1&0 &&0&0&0 &91&9 &0  &&0&0&0 &82&18&0\\
    & 1000&0&0&0 &100&0&0 &&0&0&0 &99&1 &0  &&0&0&0 &91&9 &0 \\[0.1cm]
    
    \hline
    \\[-0.2cm]
    10-item && \multicolumn{6}{c}{BIC} 
    &\, & \multicolumn{6}{c}{$\overline{\SBIC}_{6}$}&\, & \multicolumn{6}{c}{$\overline{\SBIC}_{5.5}$} \\
    \cline{3-8}  \cline{10-15} \cline{17-22} 
    \\[-0.3cm]
    & 100 &0&18&82&0  &0&0 &&0&0&55&44 &1&0  &&0& 0 & 46 & 51 & 3 &0\\
    & 200 &0&1 &84&15 &0&0 &&0&0&23&77 &0&0  &&0& 0 & 16 & 83 & 1 &0\\
    & 300 &0&0 &52&48 &0&0 &&0&0&5 &95 &0&0  &&0& 0 & 3 & 96 & 1 &0 \\
    & 500 &0&0 &11&89 &0&0 &&0&0&1 &99 &0&0  &&0& 0 & 1 & 99 &0& 0  \\
    & 1000&0&0 &0& 100&0&0 &&0&0&0 &100&0&0  &&0& 0 &0& 100 &0& 0   \\
    \cline{3-8}  \cline{10-15} \cline{17-22} 
    \\[-0.1cm]
    \multicolumn{2}{c}{} 
    & \multicolumn{6}{c}{$\overline{\SBIC}_{5}$} 
    &\, & \multicolumn{6}{c}{$\overline{\SBIC}_{4.5}$}&\, & \multicolumn{6}{c}{$\overline{\SBIC}_{4}$} \\
    \cline{3-8}  \cline{10-15} \cline{17-22} 
    & 100 &0&0&31&65 &4&0 &&0&0&22&63&15&0 &&0&0&10&50&32&8 \\
    & 200 &0&0&12&84 &4&0 &&0&0&9 &77&13&1 &&0&0& 6&67&22&5\\
    & 300 &0&0&0 &99 &1&0 &&0&0&0 &94&6 &0 &&0&0& 0&85&14&1\\
    & 500 &0&0&0 &100&0&0 &&0&0&0 &96&4 &0 &&0&0& 0&87&12&1\\
    & 1000&0&0&0 &98 &2&0 &&0&0&0 &98&2 &0 &&0&0& 0&96&3 &1\\[0.1cm]
 
    \hline
    \hline
  \end{tabular}
\end{table}

For settings~\ref{8-simple-equal}-\ref{10-complex-unequal}, our
Table~\ref{tab:lca} reports the frequencies of how often a particular
number of classes $i$ was selected by BIC and $\overline{\SBIC}_\phi$
with $2\phi\in\{r-2,r-1,r,r+1,r+2\}$.  
We see a
tendency for the standard BIC to select overly simple models,
especially at small sample size.  This underselection is alleviated
when using the criteria $\overline{\SBIC}_\phi$ but some overselection
arises for $\phi<r/2$.  The choice $\phi=r/2$ performs quite well, and
so does $\phi=(r+1)/2$.


We do not list any results for small values of $\phi$, such as
$\phi=1$, which corresponds to a uniform distribution as prior for the
mixture weights.  In all but a handful of cases, $\overline{\SBIC}_1$
selected the largest allowed number of classes, that is, $i=6$.  When
the sample size is $n=500$ in setting~\ref{8-simple-equal}, the
relative frequency of $\overline{\SBIC}_\phi$ selecting the truth of
$i_0=4$ classes is 0.01, 0.18, and 0.57 for $\phi=1.5$, $2$, and
$2.5$, respectively.  For $n=1000$ in setting~\ref{8-simple-unequal},
these numbers are 0.02, 0.31, and 0.70.  For $n=1000$ in
setting~\ref{10-complex-unequal}, they are 0.00, 0.00, and 0.14.

\begin{table}[t!]
  \centering\scriptsize
  \caption{Latent class analysis:  Frequencies of selection of the
    number of classes by BIC and $\overline{\SBIC}_\phi$ for different values of
    $\phi$.   Three true classes.}
  \label{tab:lca-15}
  \begin{tabular}{@{\hspace{0.05cm}}lr*{2}{r}>{\bfseries}rrrrc*{2}{r}>{\bfseries}rrrrc*{2}{r}>{\bfseries}rrrr}
    \hline\hline
    \multicolumn{2}{c}{} 
    \\[-0.3cm]
    \multicolumn{2}{c}{} & \multicolumn{6}{c}{\emph{Classes}}
    && \multicolumn{6}{c}{\emph{Classes}}&& \multicolumn{6}{c}{\emph{Classes}} \\
    \cline{3-8}  \cline{10-15} \cline{17-22}
    Model & \multicolumn{1}{r}{$n$} &1&2&3&4&5&6&& 1&2&3&4&5&6&& 1&2&3&4&5&6\\
    \hline
    \\[-0.2cm]
    15-item && \multicolumn{6}{c}{BIC} 
    &\, & \multicolumn{6}{c}{$\overline{\SBIC}_{8.5}$}&\, & \multicolumn{6}{c}{$\overline{\SBIC}_{8}$} \\
    \cline{3-8}  \cline{10-15} \cline{17-22} 
    \\[-0.3cm]
& 50 	&  0&0& 100& 0& 0& 0   &&0&0 &92  &8 &0 &0  &&0&0 &88 &12&0 &0\\ 	  	
& 100 	&  0&0& 100& 0& 0& 0   &&0&0 &98  &2 &0 &0  &&0&0 &93 &7 &0 &0\\ 	  	
& 200 	&  0&0& 100& 0& 0& 0   &&0&0 &99  &1 &0 &0  &&0&0 &97 &3 &0 &0\\ 	   	
& 300 	&  0&0& 100& 0& 0& 0   &&0&0 &100 &0 &0 &0  &&0&0 &99 &1 &0 &0\\ 	
& 400 	&  0&0& 100& 0& 0& 0   &&0&0 &100 &0 &0 &0  &&0&0 &100&0 &0 &0\\ 	
& 500 	&  0&0& 100& 0& 0& 0   &&0&0 &100 &0 &0 &0  &&0&0 &99 &1 &0 &0\\ 		 	
& 1000 	&  0&0& 100& 0& 0& 0   &&0&0 &100 &0 &0 &0  &&0&0 &100&0 &0 &0\\ 		 	
    \cline{3-8}  \cline{10-15} \cline{17-22} 
    \\[-0.1cm]
    \multicolumn{2}{c}{} 
    & \multicolumn{6}{c}{$\overline{\SBIC}_{7.5}$} 
    &\, & \multicolumn{6}{c}{$\overline{\SBIC}_7$}&\, & \multicolumn{6}{c}{$\overline{\SBIC}_{6.5}$} \\
    \cline{3-8}  \cline{10-15} \cline{17-22} 
& 50 	&0&0 &82 &18&0 &0   && 0& 0& 69 &27& 4&0&& 0& 0& 51 &36& 10&3\\ 	
& 100 	&0&0 &81 &19&0 &0   && 0& 0& 68 &32& 0&0&& 0& 0& 53 &43& 4 &0\\	
& 200 	&0&0 &94 &5 &1 &0   && 0& 0& 86 &13& 1&0&& 0& 0& 68 &25& 7 &0\\ 	
& 300 	&0&0 &98 &2 &0 &0   && 0& 0& 96 &4 &0 &0  && 0& 0& 88 &12&0&0\\
& 400 	&0&0 &100&0 &0 &0   && 0& 0& 94 &6 &0 &0  && 0& 0& 87 &13&0&0\\
& 500 	&0&0 &99 &1 &0 &0   && 0& 0& 96 &4 &0 &0  && 0& 0& 91 &9 &0  &0\\
& 1000 	&0&0 &100&0 &0 &0   && 0& 0& 100& 0& 0&0&& 0& 0& 100& 0&0&0\\	 	

    \hline
    \hline
  \end{tabular}
\end{table}

Table~\ref{tab:lca-15} lists the model selection frequencies for the
problem with $r=15$ items and $i_0=3$ true classes.  This is a problem
in which models with $i\le 2$ classes fit so poorly that they are
never selected.  All methods using a heavy penalty thus select the
true number of classes in all cases.  This happens for BIC and
$\overline{\SBIC}_\phi$ with $\phi\ge 11$.  
As earlier, we report details for $2\phi\in\{r-2,r-1,r,r+1,r+2\}$.  We
see overselection for $\phi<r/2=7.5$, which decreases as $\phi$ is
increased to $\phi=r/2=7.5$, $\phi=(r+1)/2=8$, and $\phi=r/2+1=8.5$.
We note that $\overline{\SBIC}_\phi$ with $\phi\le 3$ always selected
the maximum allowed number of classes ($i=6$), and the true number of
classes ($i_0=3$) is never selected when $\phi=3.5$.  When $n=1000$,
the true number of classes ($i_0=3$) is selected with relative
frequency 0.03, 0.33, 0.69, 0.88, and 0.96 when $\phi=4$, $4.5$, $5$,
$5.5$, and $6$, respectively.

In summary, $\overline{\SBIC}_\phi$ can provide considerable
improvements over the standard BIC in terms of frequentist model
selection properties.  To avoid drastic overselection, $\phi$ should
not be chosen too small, compared to $r/2$.  Our above simulations
suggest that taking $\phi=r/2$ or possibly a bit larger, e.g., as
$\phi=(r+1)/2$, could be a good default beyond the specific settings of
latent class analysis that we treated.  

\section{Conclusion}
\label{sec:conclusion}

In this paper we introduced a new Bayesian information criterion for
singular statistical models.  The new criterion, abbreviated $\SBIC$,
is free of Monte Carlo computation and coincides with the widely-used
BIC of Schwarz when the model is regular.  Moreover, the criterion is
consistent and maintains a rigorous connection to Bayesian approaches
even in singular settings.  This latter behavior is made possible by
exploiting theoretical knowledge about the learning coefficients that
capture the large-sample behavior of the concerned marginal likelihood
integrals.  In simulations and data analysis, we showed that $\SBIC$
indeed leads to a `more Bayesian' assessment of model uncertainty and
that it may also lead to improved frequentist model selection when
compared to the standard BIC.  

\subsection*{\rm\em Priors matter for $\SBIC$}

The marginal likelihood of a singular model may depend rather heavily
on the prior distribution.  In fact, the choice of prior may also
have a strong impact on the learning coefficients that quantify the
Occam's razor effect resulting from the integration over parameters.
Therefore, different versions of $\SBIC$, motivated by different
choices of priors, can be of interest for a given singular model
selection problem.

An example where prior distributions play an important role is mixture
modeling with component distributions from a multi-parameter family;
recall our discussion in Section~\ref{sec:numerical-mix}.  Using
latent class analysis for illustration (Section~\ref{subsec:lca}), we
showed how the learning coefficients and thus also $\SBIC$ depend in
particular on whether and how quickly the prior density for the
mixture weights decays to zero or diverges as the weight vector
approaches the boundary of the probability simplex.  We explored this
in the context of Dirichlet prior distributions.  (Strictly speaking,
we considered general bounds for the learning coefficients of mixture
models.)  For good frequentist model selection of $\SBIC$ we suggest
that Dirichlet hyperparameters are not chosen too small.  In
particular, the $\SBIC$ based on a uniform distribution on the mixture
weights cannot be recommended as a default for analyzing mixtures of
multi-parameter distributions.  Similar recommendations for fully
Bayesian approaches to mixture model selection can be found in
\cite{schnatter:2006}.


\subsection*{\rm\em Dependence of $\SBIC$ on the `universe of models'}

When computing the $\SBIC$ of model $\model_i$, we average asymptotic
proxies for the marginal likelihood that are based on Schwarz's idea
of retaining terms from an asymptotic expansion.  The fact that there
is not just a single quantity to contemplate is a feature that
distinguishes singular from regular models.  The terms that are being
averaged correspond to submodels $\model_j\subseteq\model_i$ that are
deemed competitors in the model selection problem.  As a result, the
$\SBIC$ of a singular model $\model_i$ will generally depend on which
set of models we wish to select from.

In most model selection problems there is a canonical set of models to
be considered.  For instance, in mixture modeling one typically
considers all models with up to a certain number of components.  We
envision that $\SBIC$ will generally be applied with respect to such a
natural collection of models, even if the primary focus was on two
specific models.

It is also clear from its definition that the $\SBIC$ of model
$\model_i$ can change only when omitting from consideration a model
$\model_j\subset\model_i$.  We would expect this to be done only if it
is certain that these simpler models are fitting the data poorly,
which would then have little effect on $\SBIC$ scores.  Consider as an
example the version of $\SBIC$ for the galaxies data from
Section~\ref{sec:gauss-mix} (denoted there as $\overline{\SBIC}_1$).
We might wonder how the $\SBIC$ score for $\model_3$ would change if
we no longer considered the too simplistic $\model_1$ and $\model_2$,
which have only 1 and 2 mixture components, respectively.  In the new
context, $\model_3$ would be the minimal model and its $\SBIC$ score
would coincide with the ordinary BIC of $\model_3$.  In Figure 6.3,
the points depicting the BIC and the $\SBIC$ score for $\model_3$
cannot be distinguished.  There is virtually no change in the $\SBIC$
scores when omitting models $\model_1$ and $\model_2$.

Nevertheless, it would be only more appealing if we could define the
$\SBIC$ of a model without reference to the fit of other models.  The
mathematical reason for our consideration of other models is the fact
that our criterion leverages large-sample asymptotics that are based
on fixing a data-generating distribution and letting the sample size
grow.  As is the case in related distribution theory for hypothesis
tests, the limits one obtains will in general not change in a
continuous fashion as we vary the data-generating distribution.  (Of
course, finite-sample behavior of the marginal likelihood will depend
on the data-generating distribution in a continuous fashion.)  Hence,
if we want to avoid consideration of other models in the definition of
a `singular BIC', then more refined mathematical insights would be
necessary.  Specifically, one would need to find uniform asymptotic
expansions to the marginal likelihood, in the sense of \citet[Chapter
VII]{Wong}.  This, however, is a task that would be significantly
harder to accomplish than finding the already non-trivial to obtain
learning coefficients.  Indeed, we are not aware of any discussion of
uniform expansions in the statistical literature, let alone any
results on their form for specific examples.  In light of these
difficulties, we consider our proposed $\SBIC$ a promising approach of
averaging point-wise expansions to mimic how uniform expansions would
have to behave.

\subsection*{\rm\em Large numbers of models}

For problems that involve a moderate number of models and are amenable
to an exhaustive model search, the computational effort in the
calculation of $\SBIC$ scores is comparable to that for the ordinary
BIC as the effort is typically dominated by the process of fitting all
considered models to the available data.  However, the fact that our
definition of $\SBIC$ requires fitting all considered models has a
clear computational disadvantage when an exhaustive search is not
possible.  Indeed, it is not immediately clear how to implement
strategies such as greedy search with $\SBIC$.  One possible approach
would be to define $\SBIC$ by averaging only over `neighboring'
submodels but the merit of such strategies still needs to be explored.
This said, the work of \cite{drton:lin:weihs:zwiernik:2014} shows
promising results for selection of Gaussian latent forest models.

We note that when treating problems with a large number of models it
can be beneficial to adopt non-uniform prior model probabilities;
compare e.g.\ the work on regression models by \cite{Chen:2008} and
\cite{Scott:2010}, and the work on graphical models by
\cite{Foygel:2010} and \cite{Gao:2012}.  As mentioned in
Remark~\ref{rem:nonuniform}, it would be straightforward to
incorporate prior model probabilities into the definition of $\SBIC$.

\subsection*{\rm\em Use of maximum likelihood estimates}

Our aim was to generalize Schwarz's BIC in a way that recovers his
familiar criterion when the considered models are regular (recall
Remark~\ref{prop:sbic=bic}).  To this end, we estimate the true
likelihood by evaluating the likelihood function at the maximum
likelihood estimator.  However, other estimators could be used
instead.  For instance, \cite{Roeder:1997} used posterior means.
Similarly, one could consider posterior modes/penalized likelihood
methods to stay closer to a fully Bayesian analysis or simply for
regularization; see \cite{Fraley:2007} and \cite{Baudry:2015} for work
on Gaussian mixtures.  We note that penalization of the likelihood
function would provide a way to address the failure of
assumption~\ref{A1} from Section~\ref{sec:large-sample} that may occur
in mixture models with unbounded parameter space
\citep{hartigan:1985}.

\subsection*{\rm\em When learning coefficients are not known}

To our knowledge, $\SBIC$ is the first statistical method to make use
of mathematical information about the values of learning coefficients
of singular models.  The theoretical insights allow one to obtain
(crude) approximations to posterior model probabilities without Monte
Carlo integration.  At the same time, the reliance on theory also
presents a limitation as the learning coefficients may not always be
known.  Previous studies have shown that when exact values of learning
coefficients are difficult to find, it may still be possible to obtain
bounds.  For priors that are bounded from above, a learning
coefficient can be trivially bounded by the model dimension and using
dimensions in $\SBIC$ recovers the standard BIC (recall
Remark~\ref{prop:sbic=bic}).  However, more interesting bounds
can often be found by arguments that are only slightly more
complicated than parameter counting.  The usefulness of such bounds
was demonstrated in Section~\ref{sec:numerical-mix}.

Finally, our $\SBIC$ provides strong positive motivation for
theoretical studies of learning coefficients.  From a statistical
perspective, past work had a negative flavor; knowing the
values one could stress just how much smaller they can be than a
parameter count.  In contrast, new theoretical insights now yield new
statistical methodology.  We anticipate that this positive motivation
will lead to further work and results on learning coefficients.

\section*{Acknowledgments}

This collaboration started at a workshop at the American Institute of
Mathematics, and we would like to thank the participants of the
workshop for helpful discussions.  Particular thanks go to Vishesh
Karwa, Dennis Leung and Luca Weihs for help with some of the numerical
work.  Mathias Drton was supported by grants from the NSF
(DMS-1305154) and the RRF at the University of Washington
as well as an Alfred P. Sloan Fellowship.

\bibliographystyle{rss}
\bibliography{singular_bic}

\def\cprime{$'$}
\begin{thebibliography}{84}
\expandafter\ifx\csname natexlab\endcsname\relax\def\natexlab#1{#1}\fi
\expandafter\ifx\csname url\endcsname\relax
  \def\url#1{\texttt{#1}}\fi
\expandafter\ifx\csname urlprefix\endcsname\relax\def\urlprefix{URL }\fi

\bibitem[{Aitkin(2001)}]{aitkin:2001}
Aitkin, M. (2001) Likelihood and {B}ayesian analysis of mixtures.
\newblock \emph{Statistical Modelling}, \textbf{1}, 287--304.

\bibitem[{Akaike(1974)}]{Akaike:1974}
Akaike, H. (1974) A new look at the statistical model identification.
\newblock \emph{IEEE Trans. Automat. Control}, \textbf{AC-19}, 716--723.

\bibitem[{Allman \emph{et~al.}(2009)Allman, Matias and Rhodes}]{Allman:2009}
Allman, E.~S., Matias, C. and Rhodes, J.~A. (2009) Identifiability of
  parameters in latent structure models with many observed variables.
\newblock \emph{Ann. Statist}, \textbf{37}, 3099--3132.

\bibitem[{Allman \emph{et~al.}(2015)Allman, Rhodes, Sturmfels and
  Zwiernik}]{zwiernik:2014}
Allman, E.~S., Rhodes, J.~A., Sturmfels, B. and Zwiernik, P. (2015) Tensors of
  nonnegative rank two.
\newblock \emph{Linear Algebra and its Applications}, \textbf{473}, 37--53.

\bibitem[{Aoyagi(2009)}]{aoyagi:2009}
Aoyagi, M. (2009) Log canonical threshold of {V}andermonde matrix type
  singularities and generalization error of a three-layered neural network in
  {B}ayesian estimation.
\newblock \emph{Int. J. Pure Appl. Math.}, \textbf{52}, 177--204.

\bibitem[{Aoyagi(2010{\natexlab{a}})}]{aoyagi:2010}
Aoyagi, M. (2010{\natexlab{a}}) A {B}ayesian learning coefficient of
  generalization error and {V}andermonde matrix-type singularities.
\newblock \emph{Comm. Statist. Theory Methods}, \textbf{39}, 2667--2687.

\bibitem[{Aoyagi(2010{\natexlab{b}})}]{aoyagi:2010:2}
Aoyagi, M. (2010{\natexlab{b}}) Stochastic complexity and generalization error
  of a restricted {B}oltzmann machine in {B}ayesian estimation.
\newblock \emph{J. Mach. Learn. Res.}, \textbf{11}, 1243--1272.

\bibitem[{Aoyagi and Watanabe(2005)}]{aoyagi-watanabe:2005}
Aoyagi, M. and Watanabe, S. (2005) Stochastic complexities of reduced rank
  regression in {B}ayesian estimation.
\newblock \emph{Neural Networks}, \textbf{18}, 924--933.

\bibitem[{Arnol{\cprime}d \emph{et~al.}(1988)Arnol{\cprime}d, Guse{\u\i}n-Zade
  and Varchenko}]{arnold-vol2:1988}
Arnol{\cprime}d, V.~I., Guse{\u\i}n-Zade, S.~M. and Varchenko, A.~N. (1988)
  \emph{Singularities of differentiable maps. {V}ol. {II}}, vol.~83 of
  \emph{Monographs in Mathematics}.
\newblock Boston, MA: Birkh\"auser.

\bibitem[{Aza{\"{\i}}s \emph{et~al.}(2006)Aza{\"{\i}}s, Gassiat and
  Mercadier}]{azais:2006}
Aza{\"{\i}}s, J.-M., Gassiat, {\'E}. and Mercadier, C. (2006) Asymptotic
  distribution and local power of the log-likelihood ratio test for mixtures:
  bounded and unbounded cases.
\newblock \emph{Bernoulli}, \textbf{12}, 775--799.

\bibitem[{Aza{\"{\i}}s \emph{et~al.}(2009)Aza{\"{\i}}s, Gassiat and
  Mercadier}]{azais:2009}
Aza{\"{\i}}s, J.-M., Gassiat, {\'E}. and Mercadier, C. (2009) The likelihood
  ratio test for general mixture models with or without structural parameter.
\newblock \emph{ESAIM Probab. Stat.}, \textbf{13}, 301--327.

\bibitem[{Baudry and Celeux(2015)}]{Baudry:2015}
Baudry, J.-P. and Celeux, G. (2015) E{M} for mixtures.
\newblock \emph{Stat. Comput.}, \textbf{25}, 713--726.

\bibitem[{Bickel and Chernoff(1993)}]{bickel:1993}
Bickel, P.~J. and Chernoff, H. (1993) Asymptotic distribution of the likelihood
  ratio statistic in a prototypical non regular problem.
\newblock In \emph{Statistics and Probability: A Raghu Raj Bahadur Festschrift}
  (eds. K.~P. J.K.~Ghosh, S.K.~Mitra and B.~P. Rao),  83--96. New Delhi: Wiley
  Eastern.

\bibitem[{Burnham and Anderson(2002)}]{Burnham:2002}
Burnham, K.~P. and Anderson, D.~R. (2002) \emph{Model selection and multimodel
  inference}.
\newblock New York: Springer-Verlag, second edn.

\bibitem[{Catalisano \emph{et~al.}(2005)Catalisano, Geramita and
  Gimigliano}]{catalisano:2005}
Catalisano, M.~V., Geramita, A.~V. and Gimigliano, A. (2005) Higher secant
  varieties of the {S}egre varieties {$\Bbb P^1\times\dots\times\Bbb P^1$}.
\newblock \emph{J. Pure Appl. Algebra}, \textbf{201}, 367--380.

\bibitem[{Charnigo and Pilla(2007)}]{charnigo:2007}
Charnigo, R. and Pilla, R.~S. (2007) Semiparametric mixtures of generalized
  exponential families.
\newblock \emph{Scand. J. Statist.}, \textbf{34}, 535--551.

\bibitem[{Chen and Chen(2008)}]{Chen:2008}
Chen, J. and Chen, Z. (2008) Extended {B}ayesian information criterion for
  model selection with large model space.
\newblock \emph{Biometrika}, \textbf{95}, 759--771.

\bibitem[{Cheng and Phillips(2012)}]{cheng:2012}
Cheng, X. and Phillips, P.~C. (2012) Cointegrating rank selection in models
  with time-varying variance.
\newblock \emph{Journal of Econometrics}, \textbf{169}, 155 -- 165.

\bibitem[{Claeskens and Hjort(2008)}]{Claeskens:2008}
Claeskens, G. and Hjort, N.~L. (2008) \emph{Model selection and model
  averaging}.
\newblock Cambridge Series in Statistical and Probabilistic Mathematics.
  Cambridge: Cambridge University Press.

\bibitem[{DiCiccio \emph{et~al.}(1997)DiCiccio, Kass, Raftery and
  Wasserman}]{DiCiccio:1997}
DiCiccio, T.~J., Kass, R.~E., Raftery, A. and Wasserman, L. (1997) Computing
  {B}ayes factors by combining simulation and asymptotic approximations.
\newblock \emph{J. Amer. Statist. Assoc.}, \textbf{92}, 903--915.

\bibitem[{Drton(2009)}]{drton:2009}
Drton, M. (2009) Likelihood ratio tests and singularities.
\newblock \emph{Ann. Statist.}, \textbf{37}, 979--1012.

\bibitem[{Drton \emph{et~al.}(2016)Drton, Lin, Weihs and
  Zwiernik}]{drton:lin:weihs:zwiernik:2014}
Drton, M., Lin, S., Weihs, L. and Zwiernik, P. (2016) Marginal likelihood and
  model selection for {G}aussian latent tree and forest models.
\newblock \emph{Bernoulli},  to appear, see arXiv:1412.8285.

\bibitem[{Drton \emph{et~al.}(2007)Drton, Sturmfels and Sullivant}]{drton:2007}
Drton, M., Sturmfels, B. and Sullivant, S. (2007) Algebraic factor analysis:
  tetrads, pentads and beyond.
\newblock \emph{Probab. Theory Related Fields}, \textbf{138}, 463--493.

\bibitem[{Drton \emph{et~al.}(2009)Drton, Sturmfels and Sullivant}]{Drton:Book}
Drton, M., Sturmfels, B. and Sullivant, S. (2009) \emph{Lectures on algebraic
  statistics}, vol.~39 of \emph{Oberwolfach Seminars}.
\newblock Basel: Birkh\"auser Verlag.

\bibitem[{Foygel and Drton(2010)}]{Foygel:2010}
Foygel, R. and Drton, M. (2010) Extended {B}ayesian information criteria for
  {G}aussian graphical models.
\newblock \emph{Adv. Neural Inf. Process. Syst.}, \textbf{23}, 2020--2028.

\bibitem[{Fraley and Raftery(2007)}]{Fraley:2007}
Fraley, C. and Raftery, A.~E. (2007) Bayesian regularization for normal mixture
  estimation and model-based clustering.
\newblock \emph{J. Classification}, \textbf{24}, 155--181.

\bibitem[{Fraley \emph{et~al.}(2012)Fraley, Raftery, Murphy and
  Scrucca}]{fraley2012mclust}
Fraley, C., Raftery, A.~E., Murphy, T.~B. and Scrucca, L. (2012) {MCLUST}
  version 4 for {R}: normal mixture modeling for model-based clustering,
  classification, and density estimation.
\newblock Tech. Rep. 597, University of Washington, Department of Statistics.

\bibitem[{Friel and Pettitt(2008)}]{friel:2008}
Friel, N. and Pettitt, A.~N. (2008) Marginal likelihood estimation via power
  posteriors.
\newblock \emph{J. R. Stat. Soc. Ser. B Stat. Methodol.}, \textbf{70},
  589--607.

\bibitem[{Friel and Wyse(2012)}]{friel:2012}
Friel, N. and Wyse, J. (2012) Estimating the evidence---a review.
\newblock \emph{Stat. Neerl.}, \textbf{66}, 288--308.

\bibitem[{Fr{\"u}hwirth-Schnatter(2006)}]{schnatter:2006}
Fr{\"u}hwirth-Schnatter, S. (2006) \emph{Finite mixture and {M}arkov switching
  models}.
\newblock Springer Series in Statistics. Springer, New York.

\bibitem[{Gao \emph{et~al.}(2012)Gao, Pu, Wu and Xu}]{Gao:2012}
Gao, X., Pu, D.~Q., Wu, Y. and Xu, H. (2012) Tuning parameter selection for
  penalized likelihood estimation of {G}aussian graphical model.
\newblock \emph{Statist. Sinica}, \textbf{22}, 1123--1146.

\bibitem[{Gassiat and van Handel(2013)}]{Gassiat:2013}
Gassiat, E. and van Handel, R. (2013) Consistent order estimation and minimal
  penalties.
\newblock \emph{IEEE Trans. Inform. Theory}, \textbf{59}, 1115--1128.

\bibitem[{Gassiat and van Handel(2014)}]{gassiat:2014}
Gassiat, E. and van Handel, R. (2014) The local geometry of finite mixtures.
\newblock \emph{Trans. Amer. Math. Soc.}, \textbf{366}, 1047--1072.

\bibitem[{Geiger \emph{et~al.}(2001)Geiger, Heckerman, King and
  Meek}]{geiger:2001}
Geiger, D., Heckerman, D., King, H. and Meek, C. (2001) Stratified exponential
  families: graphical models and model selection.
\newblock \emph{Ann. Statist.}, \textbf{29}, 505--529.

\bibitem[{Hartigan(1985)}]{hartigan:1985}
Hartigan, J.~A. (1985) A failure of likelihood asymptotics for normal mixtures.
\newblock In \emph{Proceedings of the {B}erkeley conference in honor of {J}erzy
  {N}eyman and {J}ack {K}iefer, {V}ol.\ {II} ({B}erkeley, {C}alif., 1983)},
  Wadsworth Statist./Probab. Ser.,  807--810. Wadsworth, Belmont, CA.

\bibitem[{Hastie \emph{et~al.}(2009)Hastie, Tibshirani and
  Friedman}]{hastie:2009}
Hastie, T., Tibshirani, R. and Friedman, J. (2009) \emph{The elements of
  statistical learning}.
\newblock Springer Series in Statistics. New York: Springer, second edn.
\newblock Data mining, inference, and prediction.

\bibitem[{Haughton(1989)}]{Haughton:1989}
Haughton, D. (1989) Size of the error in the choice of a model to fit data from
  an exponential family.
\newblock \emph{Sankhy\=a Ser. A}, \textbf{51}, 45--58.

\bibitem[{Haughton(1988)}]{Haughton:1988}
Haughton, D. M.~A. (1988) On the choice of a model to fit data from an
  exponential family.
\newblock \emph{Ann. Statist.}, \textbf{16}, 342--355.

\bibitem[{Hoeting \emph{et~al.}(1999)Hoeting, Madigan, Raftery and
  Volinsky}]{hoeting:1999}
Hoeting, J.~A., Madigan, D., Raftery, A.~E. and Volinsky, C.~T. (1999) Bayesian
  model averaging: a tutorial.
\newblock \emph{Statist. Sci.}, \textbf{14}, 382--417.
\newblock With comments by M. Clyde, David Draper and E. I. George, and a
  rejoinder by the authors.

\bibitem[{Kass and Wasserman(1995)}]{Kass:1995}
Kass, R.~E. and Wasserman, L. (1995) A reference {B}ayesian test for nested
  hypotheses and its relationship to the {S}chwarz criterion.
\newblock \emph{J. Amer. Statist. Assoc.}, \textbf{90}, 928--934.

\bibitem[{Keribin(2000)}]{keribin:2000}
Keribin, C. (2000) Consistent estimation of the order of mixture models.
\newblock \emph{Sankhy\=a Ser. A}, \textbf{62}, 49--66.

\bibitem[{Konishi and Kitagawa(2008)}]{Konishi:2008}
Konishi, S. and Kitagawa, G. (2008) \emph{Information criteria and statistical
  modeling}.
\newblock Springer Series in Statistics. New York: Springer.

\bibitem[{Lee and Robert(2013)}]{LeeRobert2013}
Lee, J.~E. and Robert, C.~P. (2013) {Importance sampling schemes for evidence
  approximation in mixture models}.
\newblock {arXiv:1311.6000}.

\bibitem[{Lin(2011)}]{Lin:2011}
Lin, S. (2011) Asymptotic approximation of marginal likelihood integrals.
\newblock {arXiv:1003.5338v2}.

\bibitem[{Linzer and Lewis(2011)}]{poLCA}
Linzer, D.~A. and Lewis, J.~B. (2011) {poLCA}: {A}n {R} package for polytomous
  variable latent class analysis.
\newblock \emph{Journal of Statistical Software}, \textbf{42}, 1--29.

\bibitem[{Liu and Shao(2003)}]{Liu:2003}
Liu, X. and Shao, Y. (2003) Asymptotics for likelihood ratio tests under loss
  of identifiability.
\newblock \emph{Ann. Statist.}, \textbf{31}, 807--832.

\bibitem[{Lopes and West(2004)}]{Lopes:2004}
Lopes, H.~F. and West, M. (2004) Bayesian model assessment in factor analysis.
\newblock \emph{Statist. Sinica}, \textbf{14}, 41--67.

\bibitem[{Marin \emph{et~al.}(2005)Marin, Mengersen and Robert}]{Marin:2005}
Marin, J.-M., Mengersen, K. and Robert, C.~P. (2005) Bayesian modelling and
  inference on mixtures of distributions.
\newblock In \emph{Bayesian thinking: modeling and computation}, vol.~25 of
  \emph{Handbook of Statist.},  459--507. Elsevier/North-Holland, Amsterdam.

\bibitem[{McLachlan and Peel(2000)}]{McLachlan:2000}
McLachlan, G. and Peel, D. (2000) \emph{Finite mixture models}.
\newblock Wiley Series in Probability and Statistics: Applied Probability and
  Statistics. Wiley-Interscience, New York.

\bibitem[{Neal(1999)}]{neal:1999}
Neal, R. (1999) Erroneous results in `{M}arginal likelihood from the {G}ibbs
  output'.
\newblock Unpublished letter, available at {\tt
  http://www.cs.toronto.edu/\~{}radford}.

\bibitem[{Nishii(1984)}]{Nishii:1984}
Nishii, R. (1984) Asymptotic properties of criteria for selection of variables
  in multiple regression.
\newblock \emph{Ann. Statist.}, \textbf{12}, 758--765.

\bibitem[{Nobile(2005)}]{Nobile:2005}
Nobile, A. (2005) Bayesian finite mixtures: a note on prior specification and
  posterior computation.
\newblock Tech. Rep. 05-3, University of Glasgow, Department of Statistics.
\newblock {arXiv:0711.0458}.

\bibitem[{Nylund \emph{et~al.}(2007)Nylund, Asparouhov and
  Muth{\'e}n}]{nylund:2007}
Nylund, K.~L., Asparouhov, T. and Muth{\'e}n, B.~O. (2007) Deciding on the
  number of classes in latent class analysis and growth mixture modeling: a
  {M}onte {C}arlo simulation study.
\newblock \emph{Struct. Equ. Model.}, \textbf{14}, 535--569.

\bibitem[{Okamoto(1973)}]{okamoto:1973}
Okamoto, M. (1973) Distinctness of the eigenvalues of a quadratic form in a
  multivariate sample.
\newblock \emph{Ann. Statist.}, \textbf{1}, 763--765.

\bibitem[{Petralia \emph{et~al.}(2012)Petralia, Rao and Dunson}]{Petralia:2012}
Petralia, F., Rao, V. and Dunson, D.~B. (2012) Repulsive mixtures.
\newblock In \emph{Advances in Neural Information Processing Systems 25} (eds.
  F.~Pereira, C.~Burges, L.~Bottou and K.~Weinberger),  1889--1897. Curran
  Associates, Inc.

\bibitem[{Raftery(1995)}]{raftery:1995}
Raftery, A.~E. (1995) Bayesian model selection in social research.
\newblock \emph{Sociological Methodology}, \textbf{25}, 111--163.

\bibitem[{Reinsel and Velu(1998)}]{reinsel:1998}
Reinsel, G.~C. and Velu, R.~P. (1998) \emph{Multivariate reduced-rank
  regression}, vol. 136 of \emph{Lecture Notes in Statistics}.
\newblock New York: Springer-Verlag.

\bibitem[{Richardson and Green(1997)}]{richardson:green:1997}
Richardson, S. and Green, P.~J. (1997) On {B}ayesian analysis of mixtures with
  an unknown number of components.
\newblock \emph{J. Roy. Statist. Soc. Ser. B}, \textbf{59}, 731--792.

\bibitem[{Roeder and Wasserman(1997)}]{Roeder:1997}
Roeder, K. and Wasserman, L. (1997) Practical {B}ayesian density estimation
  using mixtures of normals.
\newblock \emph{J. Amer. Statist. Assoc.}, \textbf{92}, 894--902.

\bibitem[{Rotnitzky \emph{et~al.}(2000)Rotnitzky, Cox, Bottai and
  Robins}]{Rotnitzky:2000}
Rotnitzky, A., Cox, D.~R., Bottai, M. and Robins, J. (2000) Likelihood-based
  inference with singular information matrix.
\newblock \emph{Bernoulli}, \textbf{6}, 243--284.

\bibitem[{Rousseau and Mengersen(2011)}]{rousseach:2011}
Rousseau, J. and Mengersen, K. (2011) Asymptotic behaviour of the posterior
  distribution in overfitted mixture models.
\newblock \emph{J. R. Stat. Soc. Ser. B Stat. Methodol.}, \textbf{73},
  689--710.

\bibitem[{Rusakov and Geiger(2005)}]{rusakov:2005}
Rusakov, D. and Geiger, D. (2005) Asymptotic model selection for naive
  {B}ayesian networks.
\newblock \emph{Journal of Machine Learning Research}, \textbf{6}, 1--35.

\bibitem[{Schwarz(1978)}]{Schwarz:1978}
Schwarz, G. (1978) Estimating the dimension of a model.
\newblock \emph{Ann. Statist.}, \textbf{6}, 461--464.

\bibitem[{Scott and Berger(2010)}]{Scott:2010}
Scott, J.~G. and Berger, J.~O. (2010) Bayes and empirical-{B}ayes multiplicity
  adjustment in the variable-selection problem.
\newblock \emph{Ann. Statist.}, \textbf{38}, 2587--2619.

\bibitem[{Steele and Raftery(2010)}]{steele:2010}
Steele, R.~J. and Raftery, A.~E. (2010) Performance of {B}ayesian model
  selection criteria for {G}aussian mixture models.
\newblock In \emph{Frontiers of Statistical Decision Making and {B}ayesian
  Analysis}, chap. 4.1,  113--130. New York: Springer.

\bibitem[{Teicher(1963)}]{teicher:1963}
Teicher, H. (1963) Identifiability of finite mixtures.
\newblock \emph{Ann. Math. Statist.}, \textbf{34}, 1265--1269.

\bibitem[{Tierney and Kadane(1986)}]{Tierney:1986}
Tierney, L. and Kadane, J.~B. (1986) Accurate approximations for posterior
  moments and marginal densities.
\newblock \emph{J. Amer. Statist. Assoc.}, \textbf{81}, 82--86.

\bibitem[{Titterington \emph{et~al.}(1985)Titterington, Smith and
  Makov}]{Titterington}
Titterington, D.~M., Smith, A. F.~M. and Makov, U.~E. (1985) \emph{Statistical
  analysis of finite mixture distributions}.
\newblock Wiley Series in Probability and Mathematical Statistics: Applied
  Probability and Statistics. John Wiley \& Sons, Ltd., Chichester.

\bibitem[{van Erven \emph{et~al.}(2012)van Erven, Gr{\"u}nwald, de~Rooij and
  Jandhyala}]{vanerven:2012}
van Erven, T., Gr{\"u}nwald, P., de~Rooij, S. and Jandhyala, V.~K. (2012)
  Catching up faster by switching sooner: a predictive approach to adaptive
  estimation with an application to the {AIC}-{BIC} dilemma.
\newblock \emph{J. R. Stat. Soc. Ser. B. Stat. Methodol.}, \textbf{74},
  361--417.

\bibitem[{Vasil{\cprime}ev(1979)}]{vasilev:1979}
Vasil{\cprime}ev, V.~A. (1979) Asymptotic behavior of exponential integrals in
  the complex domain.
\newblock \emph{Funktsional. Anal. i Prilozhen.}, \textbf{13}, 1--12, 96.

\bibitem[{Wasserman(2000)}]{Wasserman:2000}
Wasserman, L. (2000) Bayesian model selection and model averaging.
\newblock \emph{J. Math. Psych.}, \textbf{44}, 92--107.

\bibitem[{Watanabe and Watanabe(2007)}]{watanabe:watanabe:2007}
Watanabe, K. and Watanabe, S. (2007) Stochastic complexity for mixture of
  exponential families in generalized variational {B}ayes.
\newblock \emph{Theoret. Comput. Sci.}, \textbf{387}, 4--17.

\bibitem[{Watanabe(2001)}]{watanabe:2001}
Watanabe, S. (2001) Algebraic analysis for nonidentifiable learning machines.
\newblock \emph{Neural Comput.}, \textbf{13}, 899--933.

\bibitem[{Watanabe(2009)}]{Watanabe:Book}
Watanabe, S. (2009) \emph{Algebraic geometry and statistical learning theory},
  vol.~25 of \emph{Cambridge Monographs on Applied and Computational
  Mathematics}.
\newblock Cambridge: Cambridge University Press.

\bibitem[{Watanabe(2013)}]{watanabe:2012}
Watanabe, S. (2013) A widely applicable {B}ayesian information criterion.
\newblock \emph{J. Mach. Learn. Res.}, \textbf{14}, 867--897.

\bibitem[{Watanabe and Amari(2003)}]{watanabe-amari:2003}
Watanabe, S. and Amari, S.-I. (2003) Learning coefficients of layered models
  when the true distribution mismatches the singularities.
\newblock \emph{Neural Comput.}, \textbf{15}, 1013--1033.

\bibitem[{Wit \emph{et~al.}(2012)Wit, van~den Heuvel and Romeijn}]{wit:2012}
Wit, E., van~den Heuvel, E. and Romeijn, J.-W. (2012) `{A}ll models are
  wrong{$\ldots$}': an introduction to model uncertainty.
\newblock \emph{Stat. Neerl.}, \textbf{66}, 217--236.

\bibitem[{Wong(2001)}]{Wong}
Wong, R. (2001) \emph{Asymptotic approximations of integrals}, vol.~34 of
  \emph{Classics in Applied Mathematics}.
\newblock Society for Industrial and Applied Mathematics (SIAM), Philadelphia,
  PA.
\newblock Corrected reprint of the 1989 original.

\bibitem[{Yamazaki and Watanabe(2003)}]{yamazaki-watanabe:2003}
Yamazaki, K. and Watanabe, S. (2003) Singularities in mixture models and upper
  bounds of stochastic complexity.
\newblock \emph{Neural Networks}, \textbf{16}, 1029--1038.

\bibitem[{Yamazaki and Watanabe(2004)}]{yamazaki-watanabe:2004}
Yamazaki, K. and Watanabe, S. (2004) Newton diagram and stochastic complexity
  in mixture of binomial distributions.
\newblock In \emph{Algorithmic Learning Theory}, vol. 3244 of \emph{Lecture
  Notes in Comput. Sci.},  350--364. Berlin: Springer-Verlag.

\bibitem[{Yamazaki and Watanabe(2005)}]{yamazaki:2005}
Yamazaki, K. and Watanabe, S. (2005) Algebraic geometry and stochastic
  complexity of hidden {M}arkov models.
\newblock \emph{Neurocomput.}, \textbf{69}, 62--84.

\bibitem[{Yang(2005)}]{yang:2005}
Yang, Y. (2005) Can the strengths of {AIC} and {BIC} be shared? {A} conflict
  between model identification and regression estimation.
\newblock \emph{Biometrika}, \textbf{92}, 937--950.

\bibitem[{Zwiernik(2011)}]{zwiernik:2011}
Zwiernik, P. (2011) An asymptotic behaviour of the marginal likelihood for
  general {M}arkov models.
\newblock \emph{J. Mach. Learn. Res.}, \textbf{12}, 3283--3310.

\bibitem[{Zwiernik and Smith(2012)}]{zwiernik:2012}
Zwiernik, P. and Smith, J.~Q. (2012) Tree cumulants and the geometry of binary
  tree models.
\newblock \emph{Bernoulli}, \textbf{18}, 290--321.

\end{thebibliography}

\end{document}